\documentclass[a4paper,UKenglish,cleveref,autoref,thm-restate]{lipics-v2021}

\title{Bounded-Density Vertex Deletion Beyond Vertex Cover}
\titlerunning{Bounded-Density Vertex Deletion Beyond Vertex Cover}

\author{Adrian Micl\u au\c s}{Faculty of Mathematics and Computer Science, University of Bucharest, Romania}{adrian.miclaus1@gmail.com}{}{}
\author{Alexandru Popa}{Faculty of Mathematics and Computer Science, University of Bucharest, Romania}{alexandru.popa@fmi.unibuc.ro}{}{}
\author{Andrei Popa}{Faculty of Mathematics and Computer Science, University of Bucharest, Romania}{andreipopa0498@gmail.com}{}{}
\authorrunning{Adrian Micl\u{a}u\c{s}, Alexandru Popa, Andrei Popa}

\ccsdesc[500]{Theory of computation}
\keywords{Bounded-Density Vertex Deletion, Parameterized Complexity, Twin Cover, Distance to Cluster, Treewidth, Treedepth}

\category{}
\relatedversion{}

\nolinenumbers

\EventEditors{}
\EventNoEds{0}
\EventLongTitle{}
\EventShortTitle{}
\EventAcronym{}
\EventYear{2027}
\EventDate{}
\EventLocation{}
\EventLogo{}
\SeriesVolume{0}
\ArticleNo{0}

\usepackage{amsmath}
\usepackage{amssymb}
\usepackage{mathtools}
\usepackage{booktabs}

\begin{document}
\maketitle

\begin{abstract}
We study \textsc{Bounded-Density Vertex Deletion} (BDVD),
where one asks whether at most \(k\) vertices can be deleted
so that every remaining subgraph has density at most a rational
input threshold \(\tau_\rho\).

We address the twin-cover and distance-to-cluster cases left
open by Raupach, Breitkopf, Herrmann, and Nichterlein
(MFCS 2026). We prove fixed-parameter tractability
parameterized solely by the twin-cover number, for unrestricted
input density. For distance to cluster, we establish
fixed-parameter tractability parameterized by
\(\operatorname{dc}(G)+\lfloor2\tau_\rho+1\rfloor\),
and complement this with \(\mathsf{W[1]}\)-hardness
parameterized by \(\operatorname{dc}(G)\) alone when the
target density is part of the input, even for integral targets.

Thus distance to cluster alone is fixed-parameter tractable
throughout \(\tau_\rho<1\), while it is insufficient for
unrestricted input density.

Finally, we settle the combined parameter treewidth plus
solution size, an open problem posed by Bazgan, Nichterlein,
and Vazquez Alferez (JCSS 2025). We prove that BDVD is
\(\mathsf{W[1]}\)-hard parameterized by
\(\operatorname{td}(G)+k\), and hence also by
\(\operatorname{tw}(G)+k\), already for targets strictly
between \(1\) and \(2\).
\end{abstract}

\newpage

\section{Introduction}

\subsection{Motivation}

The densest-subgraph problem asks for a subgraph containing many
edges relative to its number of vertices. For a nonempty graph
\(H\), its density is \(\rho(H)=|E(H)|/|V(H)|\), and the maximum
density of a subgraph of \(G\) is denoted by \(\rho^*(G)\).
The classical densest-subgraph problem is polynomial-time
solvable~\cite{Goldberg1984}, while efficient greedy approximation
algorithms have also been developed~\cite{Charikar2000}.
The problem and its numerous variants have been extensively
studied in the literature~\cite{LancianoMiyauchiFazzoneBonchi2024}.
In contrast to the unconstrained problem, imposing restrictions
on the size of the selected subgraph leads to additional
computational challenges~\cite{KhullerSaha2009}.
A natural complementary question is how difficult it is to
destroy all subgraphs whose density exceeds a prescribed threshold.

This leads to graph-modification problems in which a limited
number of vertices or edges may be removed in order to lower
the maximum subgraph density. Graph modification under
hereditary constraints is a classical topic in parameterized
complexity~\cite{Cai1996}. We study the vertex-deletion variant
\textsc{Bounded-Density Vertex Deletion} (BDVD), introduced
by Bazgan, Nichterlein, and Vazquez
Alferez~\cite{BazganNichterleinVazquez2025}.
The problem generalizes familiar deletion problems: at target
density \(0\), the residual graph must be independent, while
below density \(1\), every feasible residual graph must be
a forest. The difficulty is that feasibility is global:
after the deletion, every remaining subgraph must respect
the density threshold.

Structural parameterizations are therefore natural. If deleting
a small modulator exposes a highly regular remainder, the choices
outside the modulator may admit a compact description.
The parameters considered here form a natural hierarchy.
If \(X\) is a vertex cover, then \(G-X\) is independent.
If \(X\) is a twin cover, introduced by
Ganian~\cite{Ganian2015}, then \(G-X\) is a disjoint union
of cliques of true twins.
If \(X\) is a cluster deletion set, then \(G-X\) is an
arbitrary disjoint union of cliques; finding such deletion
sets has itself been extensively studied from a parameterized
perspective~\cite{HuffnerKomusiewiczMoserNiedermeier2010}.
Hence
\(\operatorname{dc}(G)\leq\operatorname{tc}(G)
\leq\operatorname{vc}(G)\).

The existing landscape left precisely the cases addressed
here unresolved.
Bazgan et al.~\cite{BazganNichterleinVazquez2025} proved
fixed-parameter tractability for vertex cover and asked
whether BDVD is fixed-parameter tractable for treewidth,
possibly together with the solution size.
Raupach, Breitkopf, Herrmann, and
Nichterlein~\cite{RaupachBreitkopfHerrmannNichterlein2026}
later proved hardness for treewidth and even treedepth,
obtained positive results for larger parameters such as
vertex integrity and max leaf number, and left twin cover
and distance to cluster open. They also reiterated the
open combined parameter treewidth plus solution size.

We close these gaps. Twin cover is sufficient for
fixed-parameter tractability even when the target density
is part of the input.
For distance to cluster, adding
\(\lfloor2\tau_\rho+1\rfloor\) yields an FPT algorithm,
but distance to cluster alone is \(\mathsf{W[1]}\)-hard
for unrestricted input density.
Both positive results exploit structural compression
combined with fixed-dimensional integer
programming~\cite{Lenstra1983}.
Finally, adding the deletion budget does not rescue
treewidth: the problem is already \(\mathsf{W[1]}\)-hard
for the stronger parameter treedepth plus solution size.

\subsection{Informal Problem Definition}

Let \(G=(V,E)\) be an undirected graph. For every nonempty subgraph \(H\subseteq G\), let \(\rho(H)=|E(H)|/|V(H)|\), and let \(\rho^*(G)=\max_{H\subseteq G}\rho(H)\). An instance of \textsc{Bounded-Density Vertex Deletion} consists of \(G\), an integer \(k\), and a rational target density \(\tau_\rho\). The question is whether there exists \(S\subseteq V(G)\) with \(|S|\leq k\) such that \(\rho^*(G-S)\leq\tau_\rho\).

Writing \(\tau_\rho=p/q\) with integers \(p\geq0\) and \(q>0\), the density condition is equivalent to \(q|E(H)|-p|V(H)|\leq0\) for every nonempty subgraph \(H\subseteq G-S\). Thus one may view every selected edge as contributing \(+q\) and every selected vertex as contributing \(-p\). This additive density-excess view is central both to our algorithms and to our hardness constructions.

\subsection{Previous and Related Work}

Bazgan, Nichterlein, and Vazquez Alferez~\cite{BazganNichterleinVazquez2025} initiated the systematic study of bounded-density deletion. For BDVD they obtained polynomial-time algorithms on trees and cliques, hardness on several restricted graph classes, fixed-parameter tractability parameterized by vertex cover, and hardness parameterized by the deletion budget. Their concluding questions asked, among other things, whether treewidth together with the solution size yields fixed-parameter tractability and what happens for structural parameters below vertex cover.

Raupach, Breitkopf, Herrmann, and Nichterlein~\cite{RaupachBreitkopfHerrmannNichterlein2026} refined this parameterized landscape. They proved \(\mathsf{W[1]}\)-hardness for parameters including treedepth and feedback vertex number, implying hardness for treewidth, while obtaining FPT algorithms for max leaf number and vertex integrity. If the target density is fixed, they also obtain tractability for clique-width. Their conclusion explicitly leaves the intermediate parameters distance to cluster and twin cover open, as well as the combined parameter treewidth plus solution size.

Twin cover was introduced by Ganian~\cite{Ganian2015} as a relaxation of vertex cover that allows clique components of true twins outside the modulator. For distance to cluster we use that a minimum cluster deletion set can be found in FPT time~\cite{HuffnerKomusiewiczMoserNiedermeier2010}. Both hardness reductions use unary multidimensional subset-sum variants proved \(\mathsf{W[1]}\)-hard by Ganian, Klute, and Ordyniak~\cite{GanianKluteOrdyniak2021}.

\subsection{Our Results}

\begin{table}[t]
\centering
\small
\begin{tabular}{lll}
\toprule
Parameter / restriction & Target density & Complexity \\
\midrule
Vertex cover & arbitrary input & FPT~\cite{BazganNichterleinVazquez2025} \\
Twin cover & arbitrary input & FPT, this work \\
Distance to cluster \(+\lfloor2\tau_\rho+1\rfloor\) & arbitrary input & FPT, this work \\
Distance to cluster & arbitrary input & \(\mathsf{W[1]}\)-hard, this work \\
Distance to cluster & \(\tau_\rho<1\) & FPT, this work \\
Treedepth \(+k\) & \(1<\tau_\rho<2\) & \(\mathsf{W[1]}\)-hard, this work \\
Treewidth \(+k\) & \(1<\tau_\rho<2\) & \(\mathsf{W[1]}\)-hard, this work \\
\bottomrule
\end{tabular}
\caption{The parameterized landscape established in this work, together with the known vertex-cover result.}
\label{tab:results}
\end{table}

Our first result closes the twin-cover case. The challenge beyond vertex cover is that \(G-X\) may contain arbitrarily many arbitrarily large cliques. We show that a worst density witness takes either all surviving vertices of a twin clique or none. A simultaneous level-filling argument then compresses all cliques with the same neighborhood in \(X\) to a single survivor count. The resulting integer program has dimension depending only on \(\operatorname{tc}(G)\), giving FPT for arbitrary input density.

Our second result gives an FPT algorithm for distance to cluster once the density-dependent quantity \(h=\lfloor2\tau_\rho+1\rfloor\) is included in the parameter. Unlike twin cliques, vertices in one cluster component may have different neighborhoods in the modulator. However, at most \(h\) vertices can survive in any clique. Truncating neighborhood multiplicities at \(h\) yields bounded signatures and configurations and again leads to a fixed-dimensional integer program. In particular, when \(\tau_\rho<1\), we have \(h\leq2\), so distance to cluster alone is FPT.

Our third result shows that this extra density dependence is essential in general. We prove \(\mathsf{W[1]}\)-hardness parameterized solely by distance to cluster when the target density is part of the input, even for integral targets and even when an explicit cluster deletion set is supplied. The reduction uses a clique gadget that behaves as a binary switch: an untouched item clique contributes its complete cross-edge load to the density excess, while one deletion makes every local contribution nonpositive.

Our fourth result settles the combined parameter treewidth plus solution size negatively. We prove the stronger \(\mathsf{W[1]}\)-hardness for \(\operatorname{td}(G)+k\), already for targets \(1<\tau_\rho<2\). The reduction encodes multidimensional capacities by multiplicities of length-two paths and includes a normalization argument showing that arbitrary deletions provide no advantage over deleting designated item vertices.

\section{Preliminaries}
\label{sec:preliminaries}

We consider finite, simple, undirected graphs. For a graph \(G\), write \(V(G)\) and \(E(G)\) for its vertex and edge sets, and \(n=|V(G)|\), \(m=|E(G)|\). For \(X\subseteq V(G)\), let \(G-X\) be obtained by deleting \(X\) and let \(G[X]\) be the subgraph induced by \(X\).

For every nonempty graph \(H\), its density is \(\rho(H)=|E(H)|/|V(H)|\). Since adding an edge between already selected vertices cannot decrease density, a densest subgraph may always be chosen induced, so \(\rho^*(G)=\max_{\emptyset\neq X\subseteq V(G)}|E(G[X])|/|X|\). We use \(\rho^*(\emptyset)=0\). For any integer representation \(\tau_\rho=p/q\), with \(p\geq0\) and \(q>0\), define the density excess \(\Phi(H)=q|E(H)|-p|V(H)|\). Then \(\rho(H)\leq p/q\) if and only if \(\Phi(H)\leq0\). Replacing \((p,q)\) by a common positive multiple only scales all excess values and therefore does not affect feasibility.

A vertex cover is a set \(X\) such that \(G-X\) is independent. Two adjacent vertices \(u,v\) are true twins if \(N[u]=N[v]\). A twin cover is a set \(X\) such that every edge with both endpoints outside \(X\) joins true twins~\cite{Ganian2015}; equivalently, every component of \(G-X\) is a clique whose vertices have the same neighborhood in \(X\). Its minimum size is \(\operatorname{tc}(G)\). A cluster deletion set is a set \(X\) such that \(G-X\) is a disjoint union of cliques, and its minimum size is the distance to cluster \(\operatorname{dc}(G)\). A minimum cluster deletion set can be found in FPT time~\cite{HuffnerKomusiewiczMoserNiedermeier2010}. These definitions imply \(\operatorname{dc}(G)\leq\operatorname{tc}(G)\leq\operatorname{vc}(G)\).

The treewidth \(\operatorname{tw}(G)\) is the minimum width of a tree decomposition of \(G\). The treedepth \(\operatorname{td}(G)\) is the minimum height of a rooted forest whose closure contains \(G\). In particular, \(\operatorname{tw}(G)\leq\operatorname{td}(G)-1\). Finally, the density of \(K_r\) is \((r-1)/2\), so every graph satisfying \(\rho^*(G)\leq\tau_\rho\) contains no clique on more than \(\lfloor2\tau_\rho+1\rfloor\) vertices.

\section{Fixed-Parameter Tractability for Twin Cover}
\label{sec:twin-cover}

Let \(X\) be a twin cover of size \(t\). The graph \(G-X\) may contain arbitrarily many arbitrarily large clique components, but all vertices inside one component are indistinguishable with respect to \(X\). We combine this symmetry with convexity to compress the choices outside \(X\) into a number of integer variables depending only on \(t\).

Let \(S\) be a deletion set. We first guess \(D=S\cap X\), discarding choices with \(|D|>k\), and write \(U=X\setminus D\). There are at most \(2^t\) branches.

The components of \(G-X\) are called \emph{twin cliques}. For every twin clique \(C\), let \(A_C=N(C)\cap X\) be its common neighborhood in \(X\), let \(n_C=|C|\), and let \(r_C\) denote its number of surviving vertices.

Fix \(Y\subseteq U\), representing the vertices of the twin cover selected by a potential density witness. Suppose that the witness selects exactly \(a\) of the \(r_C\) surviving vertices of \(C\), and let \(d_C(Y)=|A_C\cap Y|\). These vertices create \(\binom{a}{2}\) edges inside \(C\) and \(ad_C(Y)\) edges to \(Y\). Their contribution to the density excess is therefore \(\phi_{C,Y}(a)=q\binom{a}{2}+(qd_C(Y)-p)a\).

\begin{lemma}
\label{lem:twin-endpoints}
Let \(Y\subseteq U\), and let \(C\) be a twin clique with \(r_C\) surviving vertices. The maximum of \(\phi_{C,Y}(a)\) over integers \(0\leq a\leq r_C\) is attained at \(a=0\) or \(a=r_C\).
\end{lemma}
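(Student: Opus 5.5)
The plan is a convexity argument. I will show that \(\phi_{C,Y}\), viewed as a function on the integers \(0,1,\dots,r_C\), is discretely convex. A convex function on an interval of integers attains its maximum at an endpoint.

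First I expand the binomial coefficient:
\[
\phi_{C,Y}(a)=\frac{q}{2}a^2+\Bigl(qd_C(Y)-p-\frac{q}{2}\Bigr)a .
\]
This is a real quadratic in \(a\) with leading coefficient \(q/2>0\), because \(q>0\) by the choice of representation \(\tau_\rho=p/q\). Hence it extends to a convex function \(f\) on the real interval \([0,r_C]\). Equivalently, and staying purely discrete, the forward differences are
\[
\Delta(a)=\phi_{C,Y}(a+1)-\phi_{C,Y}(a)=qa+qd_C(Y)-p .
\]
These are nondecreasing in \(a\), since consecutive differences differ by \(q\geq0\).

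Next I deduce the endpoint property. Take any integer \(a\) with \(0\leq a\leq r_C\) and write \(a=(1-\lambda)\cdot0+\lambda r_C\) with \(\lambda=a/r_C\in[0,1]\); the case \(r_C=0\) is trivial, since then \(a=0\) is the only choice. Convexity gives
\[
f(a)\leq(1-\lambda)f(0)+\lambda f(r_C)\leq\max\{f(0),f(r_C)\}.
\]
Since \(f\) agrees with \(\phi_{C,Y}\) on integers, the maximum over the integers \(0\leq a\leq r_C\) is attained at \(a=0\) or at \(a=r_C\).

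If one prefers to avoid the real extension, the same conclusion follows from the monotone differences. Let \(a^*\) be the first index at which \(\Delta\) becomes nonnegative. Then \(\phi_{C,Y}\) is nonincreasing on \([0,a^*]\) and nondecreasing on \([a^*,r_C]\). So every value in between is bounded by one of the two endpoint values.

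There is no real obstacle here. The only points to check are that the contribution formula is a genuine quadratic with nonnegative leading coefficient (this uses \(q>0\)), and that \(d_C(Y)\) does not depend on \(a\). The latter holds because all vertices of \(C\) share the same neighborhood \(A_C\) in \(X\), so each selected vertex of \(C\) has exactly \(d_C(Y)\) neighbors in \(Y\) regardless of which vertices are chosen.
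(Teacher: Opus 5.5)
Your proof is correct and follows the paper's argument: the paper likewise observes that \(\phi_{C,Y}\) is a convex quadratic because \(q>0\), and concludes that its maximum on \([0,r_C]\), hence on its integer points, is attained at an endpoint. Your forward-difference variant and your remark that \(d_C(Y)\) does not depend on which vertices of \(C\) are chosen are sound additions that the paper leaves implicit.
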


\begin{proof}
The function \(\phi_{C,Y}(a)=qa(a-1)/2+(qd_C(Y)-p)a\) is convex because \(q>0\). Its maximum on \([0,r_C]\), and hence on the integer points of this interval, is attained at an endpoint.
\end{proof}

For \(d\in\{0,\ldots,t\}\) and \(r\geq0\), define \(\psi_d(r)=\max\{0,q\binom{r}{2}+(qd-p)r\}\). By \cref{lem:twin-endpoints}, a worst witness either takes all surviving vertices of a twin clique or none. Hence fixed survivor counts are feasible precisely when \(q|E(G[Y])|-p|Y|+\sum_C\psi_{d_C(Y)}(r_C)\leq0\) for every \(Y\subseteq U\).

We next eliminate the need for one survivor count per clique. For every \(A\subseteq X\), let \(\mathcal C_A\) be the family of twin cliques with \(A_C=A\). We call \(A\) their \emph{neighborhood type}. There are at most \(2^t\) types. If the cliques of \(\mathcal C_A\) have survivor counts \(r_1,\ldots,r_m\), let \(R_A=\sum_i r_i\).

For a fixed witness \(Y\), all cliques of type \(A\) have the same value \(d=|A\cap Y|\), and their total contribution is \(\sum_i\psi_d(r_i)\). The distribution of the \(R_A\) survivors among the individual cliques therefore matters. Convexity, however, gives a canonical best distribution. Intuitively, concentrating many surviving vertices in one clique creates increasingly many internal edges, and hence makes that clique progressively more attractive to a violating density witness.

For example, suppose that one neighborhood type \(A\) contains three twin cliques of sizes \(5,4,\) and \(2\), and that \(R_A=6\) vertices are to survive among them. The distribution \((4,2,0)\) places four survivors in one clique, already creating six internal edges there. In contrast, level filling first places one survivor in each clique and then a second survivor in each clique, giving \((2,2,2)\). The latter spreads the internal edges as evenly as the clique capacities permit. More generally, the \(j\)-th survivor placed in a clique has a marginal cost depending only on the level \(j\) and the interaction value \(d\). Since these marginal costs are nondecreasing in \(j\), lower levels should always be filled before higher levels.

For \(d\in\{0,\ldots,t\}\), define the marginal cost of the \(j\)-th surviving vertex of a clique by \(\Delta_d(j)=\psi_d(j)-\psi_d(j-1)\).

\begin{lemma}
\label{lem:level-filling}
Fix a neighborhood type \(A\subseteq X\) and an integer \(R\). Among all distributions of \(R\) surviving vertices over the cliques of \(\mathcal C_A\), there exists one distribution that simultaneously minimizes \(\sum_i\psi_d(r_i)\) for every \(d\in\{0,\ldots,t\}\). It is obtained by filling the cliques level by level: before retaining a \((j+1)\)-st vertex in a clique, a \(j\)-th vertex is retained in every clique having capacity at least \(j\), except possibly at the final partially filled level.
\end{lemma}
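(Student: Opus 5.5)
The plan is to turn each sum \(\sum_i\psi_d(r_i)\) into a sum of marginal costs, one per retained vertex, and to use that these marginal costs are nondecreasing in the level. This holds for every \(d\) at once, so one combinatorial property of the level-filling distribution will give simultaneous optimality.

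\emph{Step 1: convexity of \(\psi_d\).} Let \(f_d(r)=q\binom{r}{2}+(qd-p)r\), exactly as in the proof of \cref{lem:twin-endpoints}. This is a convex quadratic in \(r\) with \(f_d(0)=0\). The maximum of two convex functions is convex, so \(\psi_d=\max\{0,f_d\}\) is convex on the integers, and \(\psi_d(0)=0\). Hence \(\Delta_d(j)\) is nondecreasing in \(j\geq1\), and \(\psi_d(r)=\sum_{j=1}^{r}\Delta_d(j)\).

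\emph{Step 2: slot reformulation.} Let the cliques of \(\mathcal C_A\) have capacities \(n_1,\ldots,n_m\), and assume \(R\leq\sum_i n_i\). I view a distribution \((r_1,\ldots,r_m)\) with \(0\leq r_i\leq n_i\) as the set of slots \(\{(i,j):1\leq j\leq r_i\}\). Let \(c_\ell\) be the number of cliques with \(r_i\geq\ell\), so that \(\sum_\ell c_\ell=R\). Then
\[
\sum_i\psi_d(r_i)=\sum_{\ell\geq1}\Delta_d(\ell)\,c_\ell .
\]
Set \(C_\ell=\sum_{j\leq\ell}c_j\). Every feasible distribution satisfies \(C_\ell=\sum_i\min(r_i,\ell)\leq M_\ell\), where \(M_\ell=\min\{R,\sum_i\min(n_i,\ell)\}\). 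The level-filling distribution attains \(C_\ell=M_\ell\) for every \(\ell\): it exhausts all available level-\(j\) slots for \(j<\ell\) before using any higher level. The only exception is the final partially filled level, which uses the remaining budget.

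\emph{Step 3: Abel summation.} Let \(L\) be an upper bound on all levels. Then
\[
\sum_{\ell=1}^{L}\Delta_d(\ell)c_\ell=\Delta_d(L)R-\sum_{\ell=1}^{L-1}\bigl(\Delta_d(\ell+1)-\Delta_d(\ell)\bigr)C_\ell .
\]
The first term is the same for every distribution. Each bracket is nonnegative by Step 1. So the cost is minimized by making every \(C_\ell\) as large as possible, and the level-filling distribution does this simultaneously for all \(\ell\). This distribution does not depend on \(d\), so it minimizes \(\sum_i\psi_d(r_i)\) for every \(d\in\{0,\ldots,t\}\) at once.

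The step needing the most care is Step 2. One must verify both the upper bound \(C_\ell\leq M_\ell\) for arbitrary distributions and that level filling attains it for every \(\ell\), including when capacities differ and the last level is only partially filled. A short argument per level should settle both. Ties inside a level do not affect any \(C_\ell\), so any tie-breaking among cliques is fine.
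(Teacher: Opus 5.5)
Your proof is correct and takes the same route as the paper. Convexity of \(\psi_d\) gives nondecreasing level costs \(\Delta_d(j)\), a distribution is a downward-closed set of level positions, and the ordering of levels does not depend on \(d\). Your Abel summation over the prefix counts \(C_\ell=\sum_i\min(r_i,\ell)\), together with level filling attaining \(M_\ell\) for every \(\ell\) at once, is a rigorous version of the paper's one-line greedy claim that an optimum fills lower levels before higher ones.
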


\begin{proof}
Fix \(d\). The function \(r\mapsto q\binom{r}{2}+(qd-p)r\) is convex, and so is its maximum with the constant function \(0\). Hence \(\psi_d\) is convex and \(\Delta_d(1)\leq\Delta_d(2)\leq\cdots\).

View every clique as providing positions at levels \(1,2,\ldots\) up to its size. A position at level \(j\) has cost \(\Delta_d(j)\), and using such a position requires all lower-level positions of the same clique. Since costs are nondecreasing with the level, an optimum uses all available lower-level positions before higher ones, except possibly at the last partially filled level.

Crucially, although the values \(\Delta_d(j)\) depend on \(d\), their ordering by level does not. Thus the same level-filling distribution is simultaneously optimal for every \(d\).
\end{proof}

The simultaneous part of \cref{lem:level-filling} is essential. If the minimizing distribution depended on \(d=|A\cap Y|\), then different potential witnesses \(Y\) could require different distributions of the same total number \(R_A\) of survivors. In that case, the single quantity \(R_A\) would not suffice to represent a neighborhood type. The fact that all values of \(d\) induce the same ordering of levels is exactly what permits the compression to one integer variable per type.

The last sentence of the lemma is therefore the key compression step: the canonical distribution does not depend on which witness \(Y\) we later consider.

For \(A\subseteq X\), \(d\in\{0,\ldots,t\}\), and \(R\in\{0,\ldots,N_A\}\), where \(N_A=\sum_{C\in\mathcal C_A}|C|\), let \(C_{A,d}(R)\) be the minimum of \(\sum_{C\in\mathcal C_A}\psi_d(r_C)\) over all distributions with total survivor count \(R\). By \cref{lem:level-filling}, the same level-filled distribution realizes this minimum for every \(d\). Therefore the density constraint for \(Y\subseteq U\) becomes \(q|E(G[Y])|-p|Y|+\sum_{A\subseteq X}C_{A,|A\cap Y|}(R_A)\leq0\).

Thus arbitrarily many twin cliques have been compressed to at most \(2^t\) survivor counts \(R_A\).

It remains to represent the functions \(C_{A,d}\) linearly. For \(j\geq1\), let \(m_{A,j}\) be the number of cliques in \(\mathcal C_A\) having size at least \(j\). Define \(\gamma_{A,d}(1),\ldots,\gamma_{A,d}(N_A)\) by repeating \(\Delta_d(j)\) exactly \(m_{A,j}\) times for every nonempty level \(j\). Then \(\gamma_{A,d}\) is nondecreasing and \(C_{A,d}(R)=\sum_{j=1}^{R}\gamma_{A,d}(j)\).

Hence \(C_{A,d}\) is a discrete convex function: it is the cumulative sum of nondecreasing marginal costs. We represent it as the maximum of its supporting affine functions. For every pair \((A,d)\), introduce an integer variable \(z_{A,d}\), and for every \(s\in\{1,\ldots,N_A\}\) impose \(z_{A,d}\geq C_{A,d}(s-1)+\gamma_{A,d}(s)(R_A-s+1)\).

\begin{lemma}
\label{lem:convex-epigraph}
For every integral value of \(R_A\), the maximum of the right-hand sides of the preceding inequalities is exactly \(C_{A,d}(R_A)\).
\end{lemma}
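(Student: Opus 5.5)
The plan is to write \(L_s(R)=C_{A,d}(s-1)+\gamma_{A,d}(s)(R-s+1)\) for the \(s\)-th right-hand side, viewed as an affine function of \(R\). We then show two things for every integer \(R\in\{0,\ldots,N_A\}\). First, \(L_s(R)\leq C_{A,d}(R)\) for every \(s\). Second, some \(s\) attains equality. The only input is the fact, established just before the statement, that \(\gamma_{A,d}\) is nondecreasing and \(C_{A,d}(R)=\sum_{j=1}^{R}\gamma_{A,d}(j)\) with \(C_{A,d}(0)=0\).

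For the upper bound, fix \(s\) and \(R\) and split into two cases.

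If \(R\geq s-1\), then
\[
C_{A,d}(R)-C_{A,d}(s-1)=\sum_{j=s}^{R}\gamma_{A,d}(j)\geq (R-s+1)\gamma_{A,d}(s),
\]
since every term has index \(j\geq s\) and \(\gamma_{A,d}\) is nondecreasing. This gives \(L_s(R)\leq C_{A,d}(R)\).

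If \(R<s-1\), then
\[
C_{A,d}(s-1)-C_{A,d}(R)=\sum_{j=R+1}^{s-1}\gamma_{A,d}(j)\leq (s-1-R)\gamma_{A,d}(s),
\]
since every term has index \(j\leq s-1<s\). Rearranging again yields \(L_s(R)\leq C_{A,d}(R)\). This is exactly the discrete statement that a convex piecewise-linear function lies above each of its supporting lines.

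For attainment, take \(s=R+1\) when \(R\leq N_A-1\). Then \(L_{R+1}(R)=C_{A,d}(R)+\gamma_{A,d}(R+1)\cdot 0=C_{A,d}(R)\). When \(R=N_A\), take \(s=N_A\), which gives
\[
L_{N_A}(N_A)=C_{A,d}(N_A-1)+\gamma_{A,d}(N_A)=C_{A,d}(N_A).
\]
So the maximum over \(s\in\{1,\ldots,N_A\}\) equals \(C_{A,d}(R_A)\) for every integral \(R_A\in\{0,\ldots,N_A\}\). The degenerate case \(N_A=0\), where there are no inequalities and \(C_{A,d}\equiv 0\), should be noted separately: there \(R_A=0\) is forced, and any lower bound \(z_{A,d}\geq 0\) coming from nonnegativity of \(\psi_d\) handles it.

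The main obstacle is boundary bookkeeping rather than any real difficulty. One has to make sure the index shift \(s-1\) versus \(s\) is consistent with the definition of \(\gamma\) as the marginal of the \(s\)-th survivor. One also has to restrict to \(R_A\in[0,N_A]\), which the integer program must enforce by separate constraints. Finally, the claim concerns only integral \(R_A\): between integers the maximum of the affine pieces is the piecewise-linear interpolation, which is harmless because \(R_A\) is an integer variable.
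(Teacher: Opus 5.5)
Your proof is correct and is essentially the paper's argument: monotonicity of \(\gamma_{A,d}\) gives the same two telescoping estimates bounding every right-hand side by \(C_{A,d}(R_A)\), and a suitable index attains equality. Your witness \(s=R_A+1\) (or \(s=N_A\) when \(R_A=N_A\)) is slightly more careful than the paper's choice \(s=R_A\), which is not an admissible index when \(R_A=0\). Your remark on \(N_A=0\) also rightly flags that the family of inequalities is then empty, so \(z_{A,d}\) needs an explicit lower bound such as \(z_{A,d}\geq0\).
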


\begin{proof}
Let \(R_A=r\). For \(s=r\), the right-hand side is \(C_{A,d}(r-1)+\gamma_{A,d}(r)=C_{A,d}(r)\). If \(s<r\), monotonicity of \(\gamma_{A,d}\) implies \(C_{A,d}(r)-C_{A,d}(s-1)\geq(r-s+1)\gamma_{A,d}(s)\). If \(s>r\), it implies \(C_{A,d}(s-1)-C_{A,d}(r)\leq(s-1-r)\gamma_{A,d}(s)\). Thus every other right-hand side is at most \(C_{A,d}(r)\).
\end{proof}

We now formulate an integer program for the fixed branch \(D\). For every neighborhood type \(A\subseteq X\), introduce an integer variable \(R_A\) with \(0\leq R_A\leq N_A\). For every \(A\subseteq X\) and \(d\in\{0,\ldots,t\}\), introduce \(z_{A,d}\) together with the epigraph inequalities above. For every \(Y\subseteq U\), impose \(q|E(G[Y])|-p|Y|+\sum_{A\subseteq X}z_{A,|A\cap Y|}\leq0\). Finally impose the budget constraint \(|D|+|V(G)\setminus X|-\sum_{A\subseteq X}R_A\leq k\).

Notice that \(Y=\emptyset\) is included. Hence the density constraints also detect violating subgraphs contained entirely in \(G-X\), without using any vertex of the twin cover.

\begin{lemma}
\label{lem:twin-ilp-correct}
For a fixed \(D\subseteq X\), the integer program is feasible if and only if there exists \(S\subseteq V(G)\) with \(S\cap X=D\), \(|S|\leq k\), and \(\rho^*(G-S)\leq p/q\).
\end{lemma}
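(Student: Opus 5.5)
We prove the two directions separately. Throughout, the key tool is the exact characterization of feasibility already derived: for fixed survivor counts \((r_C)_C\) with \(S\cap X=D\), we have \(\rho^*(G-S)\leq p/q\) if and only if, for every \(Y\subseteq U\),
\[
q|E(G[Y])|-p|Y|+\sum_C\psi_{d_C(Y)}(r_C)\leq0 .
\]
This follows from \cref{lem:twin-endpoints} together with the fact that a densest subgraph may be taken induced. Since vertices inside a twin clique are interchangeable, only the counts \(r_C\) matter, not which vertices of \(C\) survive.

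\textbf{Soundness (solution \(\Rightarrow\) IP feasible).} Let \(S\) be a solution with \(S\cap X=D\), and let \(r_C\) be its survivor counts. For each type \(A\), set \(R_A=\sum_{C\in\mathcal C_A}r_C\), and set \(z_{A,d}=C_{A,d}(R_A)\).
\begin{itemize}
\item By \cref{lem:convex-epigraph}, every epigraph inequality holds for this choice of \(z_{A,d}\).
\item By the definition of \(C_{A,d}\) as a minimum, \(C_{A,d}(R_A)\leq\sum_{C\in\mathcal C_A}\psi_d(r_C)\).
\item Therefore, for every \(Y\), the IP constraint is bounded above by the left-hand side of the exact characterization, which is at most \(0\).
\item The budget constraint holds because \(|S|=|D|+|V(G)\setminus X|-\sum_A R_A\leq k\).
\end{itemize}

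\textbf{Completeness (IP feasible \(\Rightarrow\) solution).} Take a feasible integral solution \((R_A,z_{A,d})\). For each type \(A\), distribute the \(R_A\) survivors over the cliques of \(\mathcal C_A\) by the level-filling distribution of \cref{lem:level-filling}. This is possible since \(0\leq R_A\leq N_A\). Delete all remaining vertices of \(V(G)\setminus X\), as well as \(D\).
\begin{itemize}
\item The budget constraint gives \(|S|\leq k\).
\item By \cref{lem:level-filling}, this single distribution attains \(C_{A,d}(R_A)\) simultaneously for every \(d\). Hence \(\sum_{C\in\mathcal C_A}\psi_d(r_C)=C_{A,d}(R_A)\) for all \(d\).
\item By \cref{lem:convex-epigraph}, \(z_{A,d}\geq C_{A,d}(R_A)\).
\item Consequently, for each \(Y\subseteq U\), the left-hand side of the exact characterization is at most the IP constraint value, which is at most \(0\).
\end{itemize}
Applying the characterization, \(\rho^*(G-S)\leq p/q\).

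The main point needing care is the exact characterization itself. An arbitrary nonempty induced subgraph \(H\) of \(G-S\) has the form \(Y\) together with \(a_C\) vertices from each clique \(C\), where \(Y=V(H)\cap X\subseteq U\). Its excess is exactly
\[
q|E(G[Y])|-p|Y|+\sum_C\phi_{C,Y}(a_C),
\]
since there are no edges between different cliques. This is at most the same expression with each \(\phi_{C,Y}(a_C)\) replaced by \(\psi_{d_C(Y)}(r_C)\), by \cref{lem:twin-endpoints} and the truncation at \(0\).

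Conversely, for the direction used in soundness, the maximizing choice (take \(C\) fully exactly when \(\phi_{C,Y}(r_C)>0\)) realizes the bound by an actual subgraph. If that subgraph is empty, the bound is \(0\) and holds trivially. Both directions therefore hold, with \(Y=\emptyset\) covering witnesses inside \(G-X\).
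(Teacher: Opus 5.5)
Your proof is correct and follows the paper's argument. From a feasible integer program you build a solution by level filling together with \cref{lem:convex-epigraph}, and for the converse you set \(z_{A,d}=C_{A,d}(R_A)\); the only differences are that you replace the paper's explicit redistribution step by the observation that \(C_{A,d}(R_A)\leq\sum_{C\in\mathcal C_A}\psi_d(r_C)\) holds by definition of the minimum, and that you spell out the all-or-nothing characterization that the paper only invokes. One minor caveat applies equally to the paper's proof: a type with \(N_A=0\) has no epigraph inequalities, so \(z_{A,d}\geq C_{A,d}(R_A)=0\) must be imposed separately (e.g.\ by adding \(z_{A,d}\geq0\) or summing only over occurring types), since otherwise such a free variable could make every density constraint trivially satisfiable.
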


\begin{proof}
Suppose first that the integer program is feasible. For every type \(A\), retain exactly \(R_A\) vertices according to the level-filling distribution and delete all remaining vertices outside \(X\), together with \(D\). The budget constraint guarantees at most \(k\) deletions.

Consider any induced subgraph \(H\) of the resulting graph and let \(Y=V(H)\cap U\). By \cref{lem:twin-endpoints}, for fixed \(Y\) the contribution of each twin clique is maximized by taking either all surviving vertices or none. Hence the maximum contribution of all cliques of type \(A\) is \(C_{A,|A\cap Y|}(R_A)\). By \cref{lem:convex-epigraph}, this is at most \(z_{A,|A\cap Y|}\). The density constraint for \(Y\) therefore implies \(\Phi(H)\leq0\).

Conversely, let \(S\) be a feasible deletion set with \(S\cap X=D\), and let \(R_A\) be the total number of surviving vertices of type \(A\). Redistribute these survivors according to level filling. By \cref{lem:level-filling}, no density constraint becomes worse. Set \(z_{A,d}=C_{A,d}(R_A)\). The epigraph constraints follow from \cref{lem:convex-epigraph}, the density constraints from feasibility, and the deletion budget is unchanged.
\end{proof}

There are at most \(2^t\) variables \(R_A\) and at most \((t+1)2^t\) variables \(z_{A,d}\), so the integer program has \(O(t2^t)\) integer variables. It has at most \(2^t\) density constraints and \(O(tn)\) epigraph inequalities. Its coefficients can be constructed in time polynomial in \(n,m,\log p,\log q\); the values \(p,q\) affect only the coefficients and not the dimension.

A minimum twin cover can itself be computed in fixed-parameter time~\cite{Ganian2015}. Together with fixed-dimensional integer linear programming~\cite{Lenstra1983}, this yields the following theorem.

\begin{theorem}
\label{thm:twin-cover}
\textsc{Bounded-Density Vertex Deletion} is fixed-parameter tractable parameterized by the twin-cover number. More precisely, for \(t=\operatorname{tc}(G)\) and \(\tau_\rho=p/q\), it can be solved in time \(g(t)\cdot\operatorname{poly}(n,m,\log p,\log q)\) for some computable function \(g\).
\end{theorem}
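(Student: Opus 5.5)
The plan is to assemble the algorithm from the pieces already established and then account carefully for the running time. First, compute a minimum twin cover \(X\) with \(|X|=t=\operatorname{tc}(G)\) in fixed-parameter time using the algorithm of Ganian~\cite{Ganian2015}. The connected components of \(G-X\) are then the twin cliques, and for each of them we record \(A_C\) and \(n_C\) in polynomial time. Next, branch over all \(D\subseteq X\) with \(|D|\leq k\), which gives at most \(2^t\) branches. For each branch, build the integer program described above: variables \(R_A\) and \(z_{A,d}\), the epigraph inequalities, one density constraint per \(Y\subseteq U\), and the budget constraint. Then decide its feasibility with Lenstra's algorithm~\cite{Lenstra1983}, in the version of Kannan or Frank--Tardos, whose running time is \(f(\text{dim})\) times a polynomial in the encoding length. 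By \cref{lem:twin-ilp-correct}, the instance is a yes-instance if and only if some branch is feasible. Every deletion set \(S\) satisfies \(S\cap X=D\) for exactly one guess \(D\), and that guess has \(|D|\leq k\).

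Correctness is then immediate from \cref{lem:twin-ilp-correct}. That lemma in turn rests on \cref{lem:twin-endpoints,lem:level-filling,lem:convex-epigraph}. So the real content of the proof is the complexity accounting.

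\textbf{Dimension.} There are at most \(2^t\) variables \(R_A\) and at most \((t+1)2^t\) variables \(z_{A,d}\), so the dimension is \(O(t2^t)\). It depends only on \(t\).

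\textbf{Number of constraints.} There are at most \(2^t\) density constraints. The epigraph inequalities number \(\sum_A (t+1)N_A\leq (t+1)n\).

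\textbf{Coefficients.} I would verify that every coefficient has bit length polynomial in \(\log n,\log p,\log q\). The values \(\psi_d(j)\), \(\Delta_d(j)\), \(\gamma_{A,d}(s)\) and \(C_{A,d}(s)\) are all bounded in absolute value by roughly \(q n^2+p n\). Each \(C_{A,d}(s)\) can be computed by prefix sums over the levels, using the counts \(m_{A,j}\), in polynomial time. Hence the encoding length \(L\) of the program is \(\operatorname{poly}(n,\log p,\log q)\), and it is multiplied by \(2^t\) only through the number of density constraints.

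Combining these bounds, Lenstra-type algorithms run in time \(O(\text{dim}^{O(\text{dim})}\cdot L^{O(1)})\). This gives \(g(t)\cdot\operatorname{poly}(n,m,\log p,\log q)\) per branch, and multiplying by the \(2^t\) branches preserves this form.

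The main obstacle I expect is purely technical: making sure the number of epigraph inequalities, which grows with \(n\), does not enter the dimension. It only enters the input length, and Lenstra-type algorithms are polynomial in the number of constraints for fixed dimension, so this is fine. I would also check the degenerate cases. For \(N_A=0\) we have \(R_A=0\), and we set \(z_{A,d}\geq 0\) with no epigraph inequalities needed; I would add the explicit bound \(z_{A,d}\geq0\) or simply drop the type. The case \(k\geq n\) is trivially yes.
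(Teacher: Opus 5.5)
Your proposal is correct and follows the paper's proof: compute a twin cover, branch over the at most \(2^t\) sets \(D=S\cap X\), invoke \cref{lem:twin-ilp-correct}, and solve each integer program of dimension \(O(t2^t)\) with a Lenstra-type algorithm; your counts of constraints and coefficient sizes simply make the paper's short complexity remark explicit. Your handling of types with \(N_A=0\) is a needed repair, not just a technicality: as the paper states the program, such a \(z_{A,d}\) gets no epigraph inequality and could be made arbitrarily negative, which trivially satisfies every density constraint whenever some neighborhood type is empty, so adding \(z_{A,d}\geq0\) (or dropping these types) is the right fix.
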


\begin{proof}
Compute a twin cover \(X\) of size \(t\) in FPT time and branch over the at most \(2^t\) choices for \(D=S\cap X\). By \cref{lem:twin-ilp-correct}, every branch is equivalent to an integer program with \(O(t2^t)\) integer variables and polynomially many constraints. Fixed-dimensional integer linear programming~\cite{Lenstra1983} solves every branch in FPT time with polynomial dependence on \(n,m,\log p,\log q\), and the factor \(2^t\) is absorbed into \(g(t)\).
\end{proof}


\section{Fixed-Parameter Tractability for Distance to Cluster}
\label{sec:distance-cluster}

Let \(X\subseteq V(G)\) be a cluster deletion set of size \(c\), so that \(G-X\) is a disjoint union of cliques. Unlike in the twin-cover setting, vertices of the same clique may have different neighborhoods in \(X\). The previous aggregation therefore does not apply.

A cluster deletion set of size \(c=\operatorname{dc}(G)\) can be found in FPT time parameterized by \(c\)~\cite{HuffnerKomusiewiczMoserNiedermeier2010}. We henceforth fix such a set \(X\).

Let \(\tau_\rho=p/q\) and define \(h=\lfloor2\tau_\rho+1\rfloor\). We guess \(D=S\cap X\), discard branches with \(|D|>k\), and write \(U=X\setminus D\). There are at most \(2^c\) branches.

The density target bounds how many vertices can survive in one clique.

\begin{lemma}
\label{lem:cluster-survivors}
Let \(C\) be a clique of \(G-X\). In every feasible solution, at most \(h=\lfloor2\tau_\rho+1\rfloor\) vertices of \(C\) survive.
\end{lemma}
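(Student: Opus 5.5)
The plan is to argue by contradiction through the clique density bound already noted in the preliminaries. Suppose $S$ is a feasible solution, so $\rho^*(G-S)\leq\tau_\rho$. Suppose also that $r\geq h+1$ vertices of $C$ survive. Since $C$ is a clique of $G-X$, these $r$ survivors induce a copy of $K_r$ in $G-S$. Its density is $(r-1)/2$.

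The key step is to compare $(r-1)/2$ with $\tau_\rho$ using the floor. From $r\geq h+1=\lfloor2\tau_\rho+1\rfloor+1$ we get $r>2\tau_\rho+1$, because $\lfloor x\rfloor+1>x$ for every real $x$. Hence $(r-1)/2>\tau_\rho$. So $G-S$ contains a subgraph of density strictly larger than $\tau_\rho$, which contradicts feasibility.

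Equivalently, in the excess formulation with $\tau_\rho=p/q$, the induced clique has $\Phi(K_r)=q\binom{r}{2}-pr=r\bigl(q(r-1)/2-p\bigr)>0$ under the same inequality. This matches the additive view used elsewhere in the paper.

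The argument is essentially immediate. The only point needing care is the strictness of the inequality at the boundary: when $2\tau_\rho+1$ is an integer, $K_h$ itself has density exactly $\tau_\rho$ and is still allowed. This is why the bound is $h$ and not $h-1$. Nothing depends on $X$ or on the branch $D$, because the survivors of $C$ form a clique regardless of which modulator vertices are deleted.
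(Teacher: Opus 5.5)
Your proof is correct and is essentially the paper's argument. The paper applies the same inequality \(\lfloor2\tau_\rho+1\rfloor>2\tau_\rho\) to a \(K_{h+1}\) on \(h+1\) of the survivors, whose density is \(h/2>\tau_\rho\), whereas you apply it to \(K_r\) on all \(r\) survivors.
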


\begin{proof}
If at least \(h+1\) vertices survived, they would induce \(K_{h+1}\), whose density is \(h/2\). Since \(h=\lfloor2\tau_\rho+1\rfloor>2\tau_\rho\), we obtain \(h/2>\tau_\rho\), contradicting feasibility.
\end{proof}

Thus an input clique may be arbitrarily large, but its residual part has size at most \(h\). This boundedness replaces the twin symmetry used in \cref{sec:twin-cover}.

Fix a clique \(C\) of \(G-X\). For every \(A\subseteq U\), let \(n_C(A)\) be the number of vertices \(v\in C\) satisfying \(N(v)\cap U=A\). Define the \emph{signature} \(\sigma_C\) by \(\sigma_C(A)=\min\{h,n_C(A)\}\). Since there are at most \(2^c\) subsets of \(U\), there are at most \((h+1)^{2^c}\) signatures.

A \emph{surviving configuration} compatible with \(\sigma\) is a vector \(s=(s_A)_{A\subseteq U}\) satisfying \(0\leq s_A\leq\sigma(A)\) for every \(A\) and \(\sum_A s_A\leq h\). The value \(s_A\) is the number of surviving vertices whose neighborhood in \(U\) is exactly \(A\).

Every feasible residual clique gives such a configuration, and every compatible configuration is realizable. Indeed, if \(\sigma(A)<h\), then \(n_C(A)=\sigma(A)\), while if \(\sigma(A)=h\), then \(n_C(A)\geq h\). Since at most \(h\) vertices survive in total, no larger multiplicity is relevant.

As a simple example, let \(U=\{x_1,x_2\}\) and \(h=3\). Suppose a clique \(C\) contains seven vertices with neighborhood \(\{x_1\}\), two with neighborhood \(\{x_2\}\), and one with neighborhood \(\{x_1,x_2\}\), and no vertices of the remaining type. Its signature records the multiplicities \(3,2,1,0\): the value \(7\) is truncated to \(3\) because no feasible solution can retain more than three vertices from \(C\) anyway. A possible surviving configuration is \(s_{\{x_1\}}=2\), \(s_{\{x_2\}}=1\), and all other entries zero. A potential density witness may then take, for instance, only one of the two surviving \(\{x_1\}\)-vertices, which is represented by a subconfiguration \(t\leq s\).

\begin{lemma}
\label{lem:cluster-counting}
For a fixed branch \(D\subseteq X\), the number of distinct signatures is at most \((h+1)^{2^c}\). Moreover, every signature admits at most \(\binom{h+2^c}{2^c}\) compatible surviving configurations, and every such configuration has a number of subconfigurations bounded by a function of \(c+h\).
\end{lemma}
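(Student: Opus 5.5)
This is a counting statement, so I will bound each of the three quantities separately, working in a fixed branch \(D\). I use only the definitions of signature and surviving configuration and the fact that \(U\subseteq X\). The index set of all vectors is the family of subsets \(A\subseteq U\). Since \(|U|\leq|X|=c\), there are at most \(2^{|U|}\leq 2^c\) indices; write \(M\leq 2^c\) for their number.

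\emph{Signatures.} A signature is a function \(\sigma\) assigning to each \(A\subseteq U\) the value \(\min\{h,n_C(A)\}\in\{0,\ldots,h\}\). It is therefore a vector in \(\{0,\ldots,h\}^M\), so there are at most \((h+1)^M\leq(h+1)^{2^c}\) distinct signatures.

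\emph{Configurations.} A configuration compatible with \(\sigma\) is a vector \(s\in\mathbb{Z}_{\geq0}^M\) with \(s_A\leq\sigma(A)\) and \(\sum_A s_A\leq h\). I drop the upper bounds \(s_A\leq\sigma(A)\), which only enlarges the set, and count nonnegative integer vectors of length \(M\) with sum at most \(h\). Adding a slack coordinate turns this into counting nonnegative solutions of an equation in \(M+1\) variables with sum exactly \(h\). By stars and bars there are \(\binom{h+M}{M}\) of them. This quantity is nondecreasing in \(M\): it counts multisets of size at most \(h\) from an \(M\)-element set, and enlarging the ground set only adds multisets. Hence the count is at most \(\binom{h+2^c}{2^c}\).

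\emph{Subconfigurations.} A subconfiguration of \(s\) is a vector \(t\) with \(0\leq t\leq s\) componentwise. Their number is \(\prod_A(s_A+1)\). By AM--GM with \(\sum_A s_A\leq h\), this is at most \((1+h/M)^M\leq e^h\). The cruder bound \(2^{\sum_A s_A}\leq 2^h\) also works, since \(s_A+1\leq 2^{s_A}\). Either bound is a function of \(h\) alone, and in particular of \(c+h\).

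There is no real obstacle; the proof is routine counting. The only points that need care are the monotonicity of the binomial bound in \(M\), which allows replacing \(2^{|U|}\) by \(2^c\), and noting that the subconfiguration bound depends only on \(h\).
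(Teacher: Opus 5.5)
Your proof is correct and follows the same route as the paper. Signatures are vectors in \(\{0,\ldots,h\}^{2^c}\); configurations are counted by dropping the bounds \(s_A\leq\sigma(A)\) and counting vectors of total at most \(h\); subconfigurations are bounded via \(\prod_A(s_A+1)\), and you also note that \(\binom{h+M}{M}\) is monotone in \(M\leq 2^c\) and give the concrete bound \(2^h\), both of which the paper leaves implicit.
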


\begin{proof}
There are at most \(2^c\) possible neighborhoods \(A\subseteq U\), and every signature entry belongs to \(\{0,\ldots,h\}\), giving at most \((h+1)^{2^c}\) signatures. A surviving configuration distributes at most \(h\) vertices among at most \(2^c\) neighborhood types. Ignoring the upper bounds \(s_A\leq\sigma(A)\) can only increase the count, and the number of nonnegative integer vectors with total at most \(h\) is \(\binom{h+2^c}{2^c}\). Finally, each coordinate of a subconfiguration \(t\leq s\) ranges over \(\{0,\ldots,s_A\}\), while \(\sum_A s_A\leq h\); hence the total number of subconfigurations is bounded solely in terms of \(c+h\).
\end{proof}

Fix \(W\subseteq U\), representing the modulator vertices selected by a potential witness, and let \(s\) be a surviving configuration. The witness may choose a subconfiguration \(t=(t_A)_{A\subseteq U}\) with \(0\leq t_A\leq s_A\), and we write \(|t|=\sum_A t_A\).

The selected clique vertices induce \(\binom{|t|}{2}\) edges, while a vertex of type \(A\) has \(|A\cap W|\) neighbors in \(W\). Their excess contribution is therefore \(g_W(t)=q\binom{|t|}{2}+q\sum_{A\subseteq U}|A\cap W|t_A-p|t|\).

For every \(W\subseteq U\) and configuration \(s\), let \(M_{W,s}\) be the maximum value of \(g_W(t)\) over all subconfigurations \(t\leq s\). Since \(|s|\leq h\), the number of such subconfigurations is bounded by a function of \(c+h\).

Different clique components of \(G-X\) are nonadjacent. Hence, after fixing \(W\), their maximum excess contributions can be optimized independently and added.

This independence is the reason the compression is compatible with a single global density constraint. Once \(W\) is fixed, choosing a maximizing subconfiguration in one clique neither creates edges to nor changes the contribution of another clique. Therefore the maximum excess of a witness using \(W\) is obtained by maximizing separately inside each residual clique and summing the resulting values \(M_{W,s}\).

For every signature \(\sigma\), let \(N_\sigma\) be the number of cliques with signature \(\sigma\). For every compatible configuration \(s\), introduce an integer variable \(x_{\sigma,s}\), representing the number of cliques of signature \(\sigma\) assigned configuration \(s\).

Every clique receives exactly one configuration, so for every \(\sigma\) we impose \(\sum_sx_{\sigma,s}=N_\sigma\). Writing \(n_X=|V(G)\setminus X|\), the deletion budget is \(|D|+n_X-\sum_{\sigma,s}|s|x_{\sigma,s}\leq k\).

Finally, for every \(W\subseteq U\), impose the density constraint \(q|E(G[W])|-p|W|+\sum_{\sigma,s}M_{W,s}x_{\sigma,s}\leq0\).

Again, \(W=\emptyset\) is included, so these constraints also detect violating subgraphs lying completely outside the modulator.

\begin{lemma}
\label{lem:cluster-ilp-correct}
For a fixed \(D\subseteq X\), the integer program above is feasible if and only if there exists \(S\subseteq V(G)\) with \(S\cap X=D\), \(|S|\leq k\), and \(\rho^*(G-S)\leq\tau_\rho\).
\end{lemma}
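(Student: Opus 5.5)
The proof follows the two directions of \cref{lem:twin-ilp-correct}, with \cref{lem:cluster-survivors} replacing the twin symmetry and the independence of distinct clique components replacing the level-filling argument. The central observation is that, for a fixed witness part \(W\subseteq U\), the maximum excess of any induced subgraph \(H\) of \(G-S\) with \(V(H)\cap U=W\) equals \(q|E(G[W])|-p|W|+\sum_C M_{W,s_C}\). Here \(s_C\) is the surviving configuration of clique \(C\), meaning the vector counting surviving vertices of \(C\) by their neighborhood in \(U\). This holds because, once \(W\) is fixed, the excess of \(H\) decomposes additively over cliques. The part of \(H\) inside \(C\) contributes exactly \(g_W(t)\), where \(t\) is its type-count vector. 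Since \(g_W(t)\) depends only on the counts \(t_A\), every subconfiguration \(t\le s_C\) is realized by some choice of surviving vertices of \(C\). Different cliques are nonadjacent, so their contributions add and can be maximized independently. Vertices of \(D\) are deleted and hence never appear in \(H\). Since \(t=0\) is a subconfiguration, each \(M_{W,s}\ge 0\), which matches the option of taking no vertex from a clique.

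For the direction from a solution to a feasible program, let \(S\) be a feasible solution with \(S\cap X=D\). By \cref{lem:cluster-survivors}, each clique \(C\) retains at most \(h\) vertices, so its vector \(s_C\) satisfies \(|s_C|\le h\). Moreover \(s_{C,A}\le\min\{h,n_C(A)\}=\sigma_C(A)\), so \(s_C\) is compatible with \(\sigma_C\). Set \(x_{\sigma,s}\) to the number of cliques with signature \(\sigma\) and configuration \(s\). Then \(\sum_s x_{\sigma,s}=N_\sigma\) holds immediately. The number of deleted vertices is \(|D|+n_X-\sum_C|s_C|\le k\), which is the budget constraint. For each \(W\subseteq U\), choose a maximizing subconfiguration in every clique and take the union with \(W\); this gives a subgraph of \(G-S\). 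If \(W\neq\emptyset\) or some chosen subconfiguration is nonzero, the subgraph is nonempty and its excess is \(\le0\) by feasibility, which is exactly the density constraint for \(W\). Otherwise the left-hand side is \(0\) and the constraint holds trivially.

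For the converse, take an integral feasible \(x\). For each signature \(\sigma\), assign configurations to the \(N_\sigma\) cliques of that signature according to the multiplicities \(x_{\sigma,s}\). Then realize each assigned \(s\) in its clique \(C\): keep \(s_A\) vertices of each type \(A\) and delete all other vertices of \(C\). This requires \(s_A\le n_C(A)\). If \(\sigma(A)<h\), compatibility gives \(s_A\le\sigma(A)=n_C(A)\). If \(\sigma(A)=h\), then \(s_A\le h\le n_C(A)\). Let \(S\) consist of \(D\) together with all deleted clique vertices; the budget constraint gives \(|S|\le k\). Now take any nonempty induced subgraph \(H\) of \(G-S\) and set \(W=V(H)\cap U\). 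By the decomposition above, \(\Phi(H)\) is at most the left-hand side of the density constraint for \(W\), which is \(\le0\). Hence \(\rho^*(G-S)\le\tau_\rho\).

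The main obstacle is stating the decomposition carefully: the contribution of a clique must depend only on the type counts relative to \(U\) rather than on which specific vertices are chosen, and vertices of \(D\) must be excluded. I would therefore isolate the decomposition as a separate claim and prove it first, after which both directions reduce to bookkeeping.
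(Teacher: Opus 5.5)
Your proposal is correct and follows essentially the same route as the paper. For one direction you realize an integral solution clique by clique; for the other you count the configurations of a feasible solution via \cref{lem:cluster-survivors} and build witnesses from maximizing subconfigurations, with the additive per-clique decomposition of the excess for fixed \(W\) doing the work in both. You are somewhat more careful than the paper in explicitly checking that compatible configurations are realizable (\(s_A\le n_C(A)\)) and in treating the empty-witness case for \(W=\emptyset\).
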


\begin{proof}
Suppose first that the integer program is feasible. For every \((\sigma,s)\), choose exactly \(x_{\sigma,s}\) cliques of signature \(\sigma\) and retain vertices realizing \(s\). Delete all other vertices outside \(X\), together with \(D\). The budget constraint guarantees at most \(k\) deletions.

Consider an induced subgraph \(H\) of the residual graph and let \(W=V(H)\cap U\). Inside each residual clique of configuration \(s\), the vertices selected by \(H\) form a subconfiguration whose excess is at most \(M_{W,s}\). Since distinct clique components are nonadjacent, the total excess of \(H\) is at most \(q|E(G[W])|-p|W|+\sum_{\sigma,s}M_{W,s}x_{\sigma,s}\), which is nonpositive by the constraint for \(W\).

Conversely, let \(S\) be feasible with \(S\cap X=D\). By \cref{lem:cluster-survivors}, every residual clique realizes a compatible configuration. Let \(x_{\sigma,s}\) count these configurations. The assignment and budget constraints follow immediately. For any \(W\subseteq U\), select in each residual clique a subconfiguration attaining \(M_{W,s}\). Their union with \(W\) has excess exactly \(q|E(G[W])|-p|W|+\sum_{\sigma,s}M_{W,s}x_{\sigma,s}\), which is nonpositive because \(S\) is feasible.
\end{proof}

By \cref{lem:cluster-counting}, the number of variables \(x_{\sigma,s}\) depends only on \(c+h\). There are at most \(2^c\) density constraints, and all signatures, configurations, subconfigurations, and values \(M_{W,s}\) can be generated in \(f(c+h)\cdot\operatorname{poly}(n,m,\log p,\log q)\) time.

\begin{theorem}
\label{thm:distance-cluster}
\textsc{Bounded-Density Vertex Deletion} is fixed-parameter tractable parameterized by \(\operatorname{dc}(G)+\lfloor2\tau_\rho+1\rfloor\). More precisely, for \(c=\operatorname{dc}(G)\) and \(h=\lfloor2\tau_\rho+1\rfloor\), it can be solved in \(f(c+h)\cdot\operatorname{poly}(n,m,\log p,\log q)\) time for some computable function \(f\).
\end{theorem}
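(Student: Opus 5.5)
The plan mirrors the proof of \cref{thm:twin-cover}. First, compute a cluster deletion set \(X\) of size \(c=\operatorname{dc}(G)\) in FPT time~\cite{HuffnerKomusiewiczMoserNiedermeier2010}, and compute \(h=\lfloor2\tau_\rho+1\rfloor\) from \(p,q\) in polynomial time. Then branch over the at most \(2^c\) sets \(D\subseteq X\) with \(|D|\leq k\). By \cref{lem:cluster-ilp-correct}, the instance is a yes-instance if and only if the integer program of some branch is feasible. So it suffices to show that each branch's program can be built and solved within the claimed time.

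\textbf{Building the program.} For a fixed branch, we compute the signature \(\sigma_C\) of every clique \(C\) of \(G-X\) in polynomial time. This takes one pass over the vertices of \(C\), recording \(N(v)\cap U\), followed by truncation at \(h\). We then obtain the counts \(N_\sigma\).

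Next we enumerate, for each signature, the compatible configurations, and for each configuration its subconfigurations. By \cref{lem:cluster-counting}, these enumerations have size bounded by a function of \(c+h\). For each \(W\subseteq U\) we evaluate \(g_W(t)\) on every subconfiguration and take the maximum to obtain \(M_{W,s}\). Each evaluation involves integers of bit length \(O(\log p+\log q+\log h+\log c)\), and \(h\leq 2p/q+1\) has polynomially many bits. Hence this step costs \(f(c+h)\cdot\operatorname{poly}(n,m,\log p,\log q)\). The coefficients \(N_\sigma\), \(n_X\), \(k\), and \(q|E(G[W])|-p|W|\) are likewise computed in polynomial time.

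\textbf{Solving the program.} The number of variables \(x_{\sigma,s}\) is at most \((h+1)^{2^c}\binom{h+2^c}{2^c}\). The number of constraints is at most \((h+1)^{2^c}+2^c+1\), plus the nonnegativity constraints. Fixed-dimensional integer linear programming~\cite{Lenstra1983}, in any of its improved forms running in time \(p^{O(p)}\cdot L\) for dimension \(p\) and encoding length \(L\), therefore solves each branch in \(f'(c+h)\cdot\operatorname{poly}(n,m,\log p,\log q)\) time. Multiplying by the \(2^c\) branches gives the theorem.

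\textbf{Main obstacle.} The only delicate point is ensuring that \(h\) does not enter the running time polynomially in a hidden way, for instance through enumeration steps whose size depends on \(n\). This is handled by the counting in \cref{lem:cluster-counting}. The signature computation is linear in \(|C|\) regardless of \(h\). All combinatorial enumeration is bounded in \(c+h\), which is exactly the parameter.
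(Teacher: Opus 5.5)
Your proposal is correct and follows the paper's own proof: compute \(X\), branch over \(D=S\cap X\), use \cref{lem:cluster-ilp-correct} for correctness, use \cref{lem:cluster-counting} to bound the dimension of each branch's integer program by a function of \(c+h\), and solve each branch by fixed-dimensional integer programming. Your accounting of construction time and of the bit lengths of the coefficients \(M_{W,s}\) fills in what the paper states in a single line; note only that reusing \(p\) for the ILP dimension clashes with the numerator of \(\tau_\rho=p/q\).
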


\begin{proof}
Compute a cluster deletion set \(X\) of size \(c\) in FPT time and branch over the at most \(2^c\) possibilities for \(D=S\cap X\). For every branch, the integer program above has dimension bounded solely by a function of \(c+h\), and by \cref{lem:cluster-ilp-correct} it is feasible exactly when the branch admits a feasible solution. Fixed-dimensional integer linear programming~\cite{Lenstra1983} solves every branch in \(f_1(c+h)\cdot\operatorname{poly}(n,m,\log p,\log q)\) time.
\end{proof}


For targets below \(1\), the density-dependent part of the parameter becomes constant.

\begin{corollary}
\label{cor:distance-cluster-subunit}
For every input target density \(\tau_\rho<1\), \textsc{Bounded-Density Vertex Deletion} is fixed-parameter tractable parameterized solely by the distance to cluster.
\end{corollary}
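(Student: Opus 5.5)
The plan is to reduce the corollary directly to \cref{thm:distance-cluster} by showing that the density-dependent part \(h=\lfloor2\tau_\rho+1\rfloor\) of the parameter is bounded by a constant whenever \(\tau_\rho<1\). We do not change the algorithm at all. We only check that its running time \(f(c+h)\cdot\operatorname{poly}(n,m,\log p,\log q)\) becomes \(f'(c)\cdot\operatorname{poly}(n,m,\log p,\log q)\) for a computable \(f'\).

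First, since \(\tau_\rho=p/q\) with \(p\geq0\), we have \(0\leq\tau_\rho<1\). Hence \(1\leq2\tau_\rho+1<3\), and therefore \(h=\lfloor2\tau_\rho+1\rfloor\in\{1,2\}\). In particular \(h\leq2\), independently of the input.

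Second, define \(f'(c)=\max\{f(c+1),f(c+2)\}\), which is computable because \(f\) is. Then the running time guaranteed by \cref{thm:distance-cluster} is at most \(f'(c)\cdot\operatorname{poly}(n,m,\log p,\log q)\) with \(c=\operatorname{dc}(G)\). This is FPT time for the parameter \(\operatorname{dc}(G)\) alone, since \(\log p\) and \(\log q\) are part of the input encoding length. If one prefers monotonicity, one may instead take \(\max_{h'\leq2}f(c+h')\).

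There is no genuine obstacle. The only point worth checking is that the whole algorithm, including the FPT computation of the cluster deletion set via~\cite{HuffnerKomusiewiczMoserNiedermeier2010} and all counting bounds of \cref{lem:cluster-counting}, depends on \(\tau_\rho\) only through \(h\) and the bit lengths of \(p\) and \(q\). That is exactly what the statement of \cref{thm:distance-cluster} asserts.
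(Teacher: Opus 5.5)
Your proposal is correct and follows the paper's proof: for \(0\le\tau_\rho<1\) one has \(h=\lfloor2\tau_\rho+1\rfloor\le2\), so \cref{thm:distance-cluster} applies with parameter at most \(\operatorname{dc}(G)+2\). Your explicit choice \(f'(c)=\max\{f(c+1),f(c+2)\}\) simply spells out a step that the paper leaves implicit.
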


\begin{proof}
If \(\tau_\rho<1\), then \(\lfloor2\tau_\rho+1\rfloor\leq2\). The claim follows from \cref{thm:distance-cluster}.
\end{proof}

\section{Hardness for Distance to Cluster Alone}
\label{sec:distance-cluster-hardness}

The dependence on the density threshold in \cref{thm:distance-cluster} cannot in general be removed. We prove that \textsc{Bounded-Density Vertex Deletion} is $\mathsf{W[1]}$-hard when parameterized solely by the distance to cluster. The reduction already uses an integral target density and supplies an explicit cluster deletion set.

We use the following exact-cardinality version of multidimensional subset sum.

\medskip
\noindent
\textsc{Unary Exact Multidimensional Subset Sum} (UEMSS)\\
\emph{Input:} vectors $a_1,\ldots,a_n\in\mathbb{N}^d$, a target vector $t\in\mathbb{N}^d$, and an integer $K$. All numbers are encoded in unary.\\
\emph{Question:} is there a set $R\subseteq[n]$ with $|R|=K$ and $\sum_{i\in R}a_i=t$?\\
\emph{Parameter:} $d$.

\begin{lemma}
\label{lem:uemss-hard}
\textsc{UEMSS} is $\mathsf{W[1]}$-hard parameterized by $d$, even on instances in which $K\leq d$.
\end{lemma}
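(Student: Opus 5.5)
We derive the lemma from the $\mathsf{W[1]}$-hardness results of Ganian, Klute, and Ordyniak~\cite{GanianKluteOrdyniak2021} for unary multidimensional subset sum. Those results show that, in the unary setting with parameter the dimension, deciding whether some subset of the given vectors sums exactly to a target vector is $\mathsf{W[1]}$-hard. The reductions there go from \textsc{Multicolored Clique}, and they keep the number of selected items bounded by a function of the parameter. The plan has two parts: first add an exact-cardinality constraint, and then make sure $K\leq d$.

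For the first part, we start from an instance $(a_1,\ldots,a_n,t)$ of the variant without a cardinality requirement. We append one coordinate equal to $1$ to every vector. For every $K\in\{0,\ldots,n\}$, we set the new target coordinate to $K$. The original instance is a yes-instance if and only if one of these UEMSS instances is. A Turing reduction suffices for $\mathsf{W[1]}$-hardness only if it is a parameterized Turing reduction. This one is, since each produced instance has parameter $d+1$ and there are polynomially many of them. Better still, we can avoid enumeration entirely by reusing the source structure. In the standard reduction from \textsc{Multicolored Clique} with $\ell$ colors, items correspond to vertices and edges. Dedicated coordinates force exactly one vertex item per color class and exactly one edge item per color pair. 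So every solution automatically has cardinality exactly $K=\ell+\binom{\ell}{2}$, and we obtain a many-one reduction directly.

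For the second part, the bound $K\leq d$ needs a dimension at least the number of selected items. In the clique-based construction there is one selector coordinate per color and one per color pair. That already gives $d\geq\ell+\binom{\ell}{2}=K$. The remaining coordinates encode vertex identities and check that edges are incident to the chosen vertices, and they only increase $d$. If needed, we can also pad with dummy coordinates whose entries and target value are $0$. This preserves yes- and no-instances and raises $d$ to at least $K$, while keeping the parameter bounded by a function of $\ell$.

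The main obstacle is making sure the source reduction really has this selector structure with unary numbers and dimension $f(\ell)$. The identity checks must be carried out with numbers polynomial in $n$, for instance by encoding a vertex index $i$ as the pair $(i,n-i)$ in two coordinates per color. Each edge item then contributes complementary values, so that equality with the target forces consistency. I would verify this equality gadget carefully. The rest of the argument, namely the selector counting and the bound $K\leq d$, is then routine.
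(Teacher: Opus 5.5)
Your proposal is correct and follows the paper's route: reuse the Ganian--Klute--Ordyniak reduction from \textsc{Multicolored Clique}, observe that its selector coordinates fix the cardinality of every solution at $\ell+\binom{\ell}{2}$, and compare this with the dimension, which the paper counts as $\ell+2\binom{\ell}{2}=\ell^2$. Your zero-coordinate padding is a harmless fallback for $K\le d$. The one thing to drop is your $(i,n-i)$ identity gadget with two coordinates per color: with exact sums the complementary coordinate is redundant, and pooling all $\ell-1$ incident edge items into those coordinates only fixes the sum of their endpoint indices, so the argument should rest on the cited construction rather than on that gadget.
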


\begin{proof}
Ganian, Klute, and Ordyniak~\cite{GanianKluteOrdyniak2021} prove $\mathsf{W[1]}$-hardness of multidimensional subset sum with unary-encoded coordinates by a reduction from \textsc{Multicolored Clique}. Inspecting their construction, the selector coordinates force every solution to contain exactly one vertex vector for each color class and exactly one edge vector for each pair of color classes. Hence the solution cardinality is fixed by the source parameter. Their construction uses $2\binom{r}{2}+r=r^2$ dimensions for a source instance with $r$ color classes, whereas every solution contains $\binom{r}{2}+r\leq r^2$ vectors. Thus the same reduction proves the stated restricted form.
\end{proof}

We next isolate the density gadget used by the reduction. Fix an integer $P\geq1$ and put $H=2P+1$. Let $C$ be a clique on $H$ vertices and let $X$ be an independent set with arbitrary edges between $C$ and $X$. For $W\subseteq X$ and $T\subseteq C$, define
$\psi_W(T)=\binom{|T|}{2}+|E(T,W)|-P|T|$.

\begin{lemma}
\label{lem:binary-clique-gadget}
Assume that $|E(C,X)|\leq P$. Then, for every $W\subseteq X$:
\begin{enumerate}
    \item if $T\subsetneq C$, then $\psi_W(T)\leq0$;
    \item $\psi_W(C)=|E(C,W)|$.
\end{enumerate}
Consequently, if all $H$ vertices of $C$ survive, its maximum local contribution to a density-excess witness using $W$ is $|E(C,W)|$, whereas after deleting at least one vertex of $C$ its maximum local contribution is zero.
\end{lemma}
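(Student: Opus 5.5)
Write $a=|T|$ and split according to whether $T$ is a proper subset of $C$ or all of $C$. Everything rests on the identity $\binom{H}{2}=H(H-1)/2=HP$, which holds because $H=2P+1$.

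\emph{Proper subsets.} Let $a\leq H-1=2P$. First bound the cross term by $|E(T,W)|\leq|E(C,X)|\leq P$. This gives $\psi_W(T)\leq\binom{a}{2}-Pa+P$. The right-hand side $f(a)=a(a-1)/2-Pa+P$ is a convex quadratic in $a$, so on the integer interval $[0,2P]$ its maximum is attained at an endpoint, exactly as in \cref{lem:twin-endpoints}. At $a=2P$ we get $f(2P)=P(2P-1)-2P^2+P=0$.

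At $a=0$ we have $f(0)=P>0$, so the crude bound is not enough there. However, $T=\emptyset$ has no cross edges at all, so $\psi_W(\emptyset)=0$ directly. For $1\leq a\leq 2P$, convexity on $[1,2P]$ gives the bound $\max\{f(1),f(2P)\}$. Here $f(1)=-P+P=0$ and $f(2P)=0$. Hence $\psi_W(T)\leq0$ for every proper subset $T$, which proves part~1.

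The one delicate point is the empty set. I will treat $a=0$ separately and not rely on the bound $f$ there, because the cross-edge bound $P$ is far too generous when $T$ is empty.

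\emph{The full clique.} For $T=C$ we have $\psi_W(C)=\binom{H}{2}+|E(C,W)|-PH$. Since $\binom{H}{2}=PH$, this equals $|E(C,W)|$, which is part~2.

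\emph{Consequence.} If all $H$ vertices of $C$ survive, the witness may choose any $T\subseteq C$. Proper subsets contribute at most $0$, and $T=C$ contributes $|E(C,W)|\geq0$, so the maximum local contribution is $|E(C,W)|$. If at least one vertex is deleted, every available $T$ is a proper subset of $C$. Each then contributes at most $0$, and $T=\emptyset$ attains $0$, so the maximum is exactly zero.
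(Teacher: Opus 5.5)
Your proof is correct and follows essentially the same route as the paper. You bound the cross term by $P$, treat $T=\emptyset$ separately, and show the resulting bound $\binom{a}{2}-Pa+P$ is nonpositive for $1\le a\le 2P$; your convexity check $f(1)=f(2P)=0$ is equivalent to the paper's inequality $s(2P-s+1)\ge 2P$, and part~2 uses the same identity $\binom{H}{2}=PH$.
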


\begin{proof}
The second claim follows from
$\binom{H}{2}-PH=0$.
Now let $1\leq s=|T|\leq H-1=2P$. Since $|E(T,W)|\leq |E(C,X)|\leq P$, we have $\psi_W(T)\leq\binom{s}{2}+P-Ps=P-s(2P-s+1)/2$.
For $1\leq s\leq2P$, the product $s(2P-s+1)$ is at least $2P$, and hence the right-hand side is nonpositive. The empty set has contribution zero. This proves both claims.
\end{proof}

\begin{theorem}
\label{thm:distance-cluster-hard}
\textsc{Bounded-Density Vertex Deletion} is $\mathsf{W[1]}$-hard parameterized by $\operatorname{dc}(G)$. This holds even if the target density is an integer and an explicit cluster deletion set $X$ is supplied with the instance.
\end{theorem}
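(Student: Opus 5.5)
We reduce from \textsc{UEMSS} restricted as in \cref{lem:uemss-hard}, so that $K\leq d$. Each item becomes a copy of the binary clique gadget of \cref{lem:binary-clique-gadget}, and the target density is the integer $\tau_\rho=P$ with $q=1$. Deleting the gadget's vertices corresponds to putting the item into $R$. Given $(a_1,\ldots,a_n,t,K)$, we first preprocess the vectors so that the exact equation $\sum_{i\in R}a_i=t$ with $|R|=K$ becomes a system of one-sided inequalities of one uniform shape:
\begin{itemize}
\item keep the $d$ coordinates $a_i[j]$;
\item add $d$ complementary coordinates $M-a_i[j]$, with $M$ larger than every entry;
\item add one all-ones coordinate.
\end{itemize}
Writing $\Sigma_\ell$ for the total of coordinate $\ell$, the condition ``$\sum_{i\notin R}b_i[\ell]\leq B_\ell$ for all $\ell$'' becomes a lower bound on $\sum_{i\in R}b_i[\ell]$. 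The complementary coordinates turn this, together with $|R|\le K$, into an upper bound on $\sum_{i\in R}a_i[j]$. The all-ones coordinate forces $|R|\geq K$. With the budget $k=K$, these constraints encode $|R|=K$ and $\sum_{i\in R}a_i=t$ exactly. All numbers stay unary, hence polynomial.

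The graph is built as follows. The modulator $X$ has, for each of the $2d+1$ coordinates $\ell$, a set $X_\ell$ of $k+1$ independent copies $x_\ell^1,\ldots,x_\ell^{k+1}$; so $|X|=(2d+1)(k+1)\le(2d+1)(d+1)$. For every item $i$ we add a clique $C_i$ on $H=2P+1$ vertices. For each $\ell$ and copy $x_\ell^r$, exactly $b_i[\ell]$ distinct vertices of $C_i$ are joined to $x_\ell^r$; this is possible since $P$ will be chosen larger than all entries. The differing per-coordinate bounds $B_\ell$ are normalized to the common threshold $P$ by a fixed extra load of $P-B_\ell$ edges at each $x_\ell^r$. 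This load comes from pendant-style structure attached to $x_\ell^r$ that is itself harmless (density $\le P$ locally) and whose deletion is never better than deleting an item vertex. Finally, choose $P$ large enough that every gadget satisfies $|E(C_i,X)|\leq P$, so \cref{lem:binary-clique-gadget} applies.

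By \cref{lem:binary-clique-gadget}, and because distinct cliques are nonadjacent, the maximum excess of a witness with modulator part $W$ is
\[
-P|W|+\sum_{x\in W}\bigl(\text{fixed load at }x\bigr)+\sum_{i\text{ untouched}}|E(C_i,W)|.
\]
This expression is additive over $x\in W$, so feasibility is equivalent to the singleton constraints $W=\{x_\ell^r\}$, i.e.\ $\sum_{i\notin R}b_i[\ell]\le B_\ell$. Completeness is then immediate: delete one vertex from each $C_i$ with $i\in R$. For soundness, deleting a copy $x_\ell^r$ is useless because $k+1$ copies exist, and at least one survives carrying the same constraint. Deleting several vertices of one $C_i$ is dominated by deleting one. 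Deleting a load vertex only relaxes a single coordinate $\ell$ by a bounded amount, and we argue this can be exchanged for an item deletion.

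The main obstacle is this soundness normalization, specifically the fixed-load gadget. It must keep the threshold uniform, must not itself violate density, and must not be a cheaper deletion target than item cliques. I would first try to avoid the load gadget altogether by scaling: multiply each coordinate $\ell$ by a factor and pad every item vector so that all $B_\ell$ become equal to one $P$. Only if that fails would I build the load from cliques protected by $(k+1)$-fold duplication, charging each copy a $1/(k+1)$ share of the load.
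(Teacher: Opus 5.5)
Your framework matches the paper's: item cliques act as binary switches, each coordinate gets $k+1$ independent copies, the target $P$ is integral, and additivity reduces feasibility to singleton witnesses. Your original/complementary/all-ones encoding is also sound once $|R|\le K$ is known. The gap is the step you flag yourself. You never bring the different bounds $B_\ell$ to the single threshold $P$, and the mechanisms you sketch do not work.

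\textbf{Why the sketched load fails.} A pendant-style load cannot supply positive excess: each pendant at $x$ changes the excess by $1-P<0$. More generally, for the singleton witness $\{x\}$, $m$ vertices from one clique of $G-X$ contribute at most $\binom{m}{2}+m-Pm=m(m+1-2P)/2$. So any positive load must come from clique components with at least $2P$ vertices. One deletion in such a clique lowers the load at $x$; by \cref{lem:binary-clique-gadget} it removes the clique's whole contribution when the clique has at most $P$ edges to $X$. Load vertices are therefore cheap and effective deletion targets. Your claim that a load deletion ``can be exchanged for an item deletion'' has no mechanism behind it, since no item deletion reproduces the relaxation of one specific coordinate. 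If, as the $1/(k+1)$ share suggests, load cliques are shared among the copies, things get worse: one deletion in a shared load clique relaxes all copies of coordinate $\ell$ at once. Relatedly, your argument that deleting a copy $x^r_\ell$ is useless silently assumes the surviving copy still carries its full load.

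\textbf{Repair within your framework.} Make the loads private. Attach to every copy $x^r_\ell$ its own clique $L^r_\ell$ on $2P+1$ vertices, exactly $P-B_\ell$ of which are adjacent to $x^r_\ell$. Choose $P\geq\max_\ell B_\ell$ and $P\geq(k+1)\max_i\sum_\ell b_i[\ell]$. Then:
\begin{itemize}
\item $G-X$ is still a cluster graph.
\item By \cref{lem:binary-clique-gadget}, an untouched $L^r_\ell$ contributes exactly $P-B_\ell$ to witnesses containing $x^r_\ell$, and nothing otherwise.
\item For each $\ell$, the $k+1$ sets $\{x^r_\ell\}\cup V(L^r_\ell)$ are pairwise disjoint, so at most $k$ deletions leave one intact. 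That copy yields $\sum_{i\in F}b_i[\ell]\le B_\ell$.
\end{itemize}
The same pigeonhole argument that handles deleted copies thus handles load deletions, and no exchange argument is needed.

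\textbf{The paper's route.} The paper avoids loads through exactly the scaling you mention, but with one common factor. It sets $z_{i,j}=C\widehat a_i[j]-b[j]$ with $C=N-K$, weights $w^\pm_{i,j}=L\pm z_{i,j}$, and threshold $P=CL$. Adding the plus and minus constraints yields $|F|\leq C$, which replaces your all-ones coordinate. If you take that route, note two things:
\begin{itemize}
\item Coordinate-specific factors would force $P$ to be a common multiple of the $B_\ell$. This can be as large as their product and is not bounded by $f(d)$ times a polynomial in the input size.
\item $P$ is now tied to $C$. The gadget hypothesis $|E(Q_i,X)|=|X|L\leq CL$ requires $C\geq|X|$. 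This is why the paper first adds $2d'(K+1)+K$ dummy items, each a unit vector of a fresh coordinate with target $0$.
\end{itemize}
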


\begin{proof}
We reduce from the restricted form of \textsc{UEMSS} in \cref{lem:uemss-hard}. Let $a_1,\ldots,a_n\in\mathbb{N}^d$, target $t\in\mathbb{N}^d$, and $K\leq d$ be an instance.

We first add one auxiliary coordinate and enough dummy items to obtain slack for the density gadget. Put $d'=d+1$. Extend every original vector by a final zero coordinate and extend the target by a final zero coordinate. Add $M=2d'(K+1)+K$ dummy vectors, each equal to the unit vector of the new coordinate. Denote the resulting vectors by $\widehat a_1,\ldots,\widehat a_N\in\mathbb{N}^{d'}$, where $N=n+M$, and let $\widehat t=(t,0)$. A size-$K$ solution cannot contain a dummy vector, so this padding preserves equivalence.

We encode the complement of a solution. Let $C=N-K=n+2d'(K+1)$ and $b=\sum_{i=1}^N\widehat a_i-\widehat t$. Thus $C\geq2d'(K+1)$. For every item $i\in[N]$ and coordinate $j\in[d']$, define $z_{i,j}=C\widehat a_i[j]-b[j]$. Choose $L=1+\max_{i,j}|z_{i,j}|$ and $P=CL$. Finally, define $w^+_{i,j}=L+z_{i,j}$ and $w^-_{i,j}=L-z_{i,j}$.
They satisfy $w^+_{i,j}+w^-_{i,j}=2L$. Moreover, since $|z_{i,j}|\leq L-1$, we have $1\leq w^\pm_{i,j}\leq2L-1<2P+1=H$, so all prescribed neighborhoods below are realizable in a simple graph.

We construct the BDVD graph $G$. The modulator $X$ consists, for every $j\in[d']$ and every $r\in[K+1]$, of two vertices $x^+_{j,r}$ and $x^-_{j,r}$. The set $X$ is independent and has $|X|=2d'(K+1)\leq2(d+1)^2$. For every item $i\in[N]$, create a clique $Q_i$ on $H=2P+1$ vertices. There are no edges between distinct item cliques. For every $j,r$, make $x^+_{j,r}$ adjacent to exactly $w^+_{i,j}$ vertices of $Q_i$, and make $x^-_{j,r}$ adjacent to exactly $w^-_{i,j}$ vertices of $Q_i$. These neighborhoods may be chosen arbitrarily subject to their prescribed cardinalities.

The total number of edges from one item clique to the modulator is $|E(Q_i,X)|=(K+1)\sum_{j=1}^{d'}(w^+_{i,j}+w^-_{i,j})=2d'(K+1)L\leq CL=P$.
Therefore \cref{lem:binary-clique-gadget} applies to every item clique. We set the target density to the integer
$\tau_\rho=P$
and the deletion budget to $K$.

Deleting $X$ leaves the disjoint union of the cliques $Q_1,\ldots,Q_N$. Hence $X$ is a cluster deletion set and
$\operatorname{dc}(G)\leq2d'(K+1)\leq2(d+1)^2$.
The construction is polynomial. Indeed, the source coordinates are unary encoded, all quantities above are polynomially bounded in the source input size, and every item clique has $2P+1$ vertices.

We record the effect of a deletion set. Let $Z$ be any set of at most $K$ deleted vertices and let $F=\{i\in[N]:Z\cap Q_i=\emptyset\}$ be the indices of item cliques that remain full. Fix $W\subseteq X\setminus Z$. Since $X$ is independent and distinct item cliques are nonadjacent, \cref{lem:binary-clique-gadget} shows that the maximum density excess among witnesses whose intersection with $X$ is exactly $W$ equals $-P|W|+\sum_{i\in F}|E(Q_i,W)|=\sum_{x\in W}(-P+\sum_{i\in F}|N(x)\cap Q_i|)$. It follows that $G-Z$ satisfies the target density if and only if
\begin{equation}
\label{eq:hard-singleton-capacity}
\sum_{i\in F}|N(x)\cap Q_i|\leq P
\end{equation}
for every surviving $x\in X$.

Suppose first that the UEMSS instance is a yes-instance, and let $R\subseteq[n]$ satisfy $|R|=K$ and $\sum_{i\in R}a_i=t$. Delete one arbitrary vertex from $Q_i$ for every $i\in R$. Then the set $F=[N]\setminus R$ has size $C$ and
$\sum_{i\in F}\widehat a_i=b$. Hence, for every $j\in[d']$, $\sum_{i\in F}z_{i,j}=C\sum_{i\in F}\widehat a_i[j]-Cb[j]=0$. Consequently, for every clone $x^+_{j,r}$ and $x^-_{j,r}$, $\sum_{i\in F}w^+_{i,j}=\sum_{i\in F}w^-_{i,j}=CL=P$.
Thus \eqref{eq:hard-singleton-capacity} holds, and the BDVD instance is a yes-instance.

Conversely, suppose that $Z$ is a feasible deletion set with $|Z|\leq K$. Let $R=[N]\setminus F$ be the set of item cliques hit by $Z$. Since every hit clique consumes at least one deletion,
$|R|\leq K$ and therefore $|F|\geq N-K=C$.
For each $j\in[d']$, the graph contains $K+1$ copies of both $x^+_{j,r}$ and $x^-_{j,r}$, while at most $K$ vertices are deleted in total. Hence at least one plus-copy and one minus-copy survive. Applying \eqref{eq:hard-singleton-capacity} to such surviving copies gives $|F|L+\sum_{i\in F}z_{i,j}\leq CL$ and $|F|L-\sum_{i\in F}z_{i,j}\leq CL$.
Adding the two inequalities yields $|F|\leq C$. Together with $|F|\geq C$, this implies $|F|=C$ and hence $|R|=K$.
In particular, all $K$ deletions are already needed to hit the $K$ cliques indexed by $R$, so no modulator vertex is deleted and exactly one vertex is deleted from every clique in $R$.

Since $|F|=C$, the two inequalities above further imply
$\sum_{i\in F}z_{i,j}=0$ for every $j$. By the definition of $z_{i,j}$, $0=C(\sum_{i\in F}\widehat a_i[j]-b[j])$, and therefore $\sum_{i\in F}\widehat a_i=b$. Taking complements gives $\sum_{i\in R}\widehat a_i=\sum_{i=1}^N\widehat a_i-b=\widehat t$.
The last coordinate of $\widehat t$ is zero, whereas every dummy vector has last coordinate one. Hence $R$ contains no dummy vector. Thus $R\subseteq[n]$, $|R|=K$, and $\sum_{i\in R}a_i=t$, so the original UEMSS instance is a yes-instance.

The reduction maps parameter $d$ to distance to cluster at most $2(d+1)^2$. This proves $\mathsf{W[1]}$-hardness.
\end{proof}


\section{Hardness for Treewidth Plus Solution Size}
\label{sec:tw-k-hardness}

We next settle the parameterization by treewidth together with the deletion budget.  In fact, we prove the stronger statement that the problem remains hard when treewidth is replaced by treedepth.  The reduction makes explicit why bounding the solution size does not remove the numerical obstruction caused by an input target density.

We use the following problem of Ganian, Klute, and Ordyniak~\cite{GanianKluteOrdyniak2021}.

An instance consists of vectors $a_1,\ldots,a_n\in\mathbb N^d$, a target vector $t\in\mathbb N^d$, and an integer $K$.  The question is whether there is a set $D\subseteq[n]$ with $|D|=K$ and $\sum_{i\in D}a_i=t$.  RMSS is $\mathsf{W[1]}$-hard parameterized by $d+K$, even when all integers are encoded in unary~\cite{GanianKluteOrdyniak2021}.

For a vertex set $U$ and a rational target $\tau_\rho=p/q$, define its \emph{excess} by
$\operatorname{exc}(U)=q|E(G[U])|-p|U|$.  Thus $\rho^*(G)\le p/q$ if and only if $\operatorname{exc}(U)\le0$ for every $U\subseteq V(G)$.

\begin{theorem}
\label{thm:tw-k-hard}
\textsc{Bounded-Density Vertex Deletion} is $\mathsf{W[1]}$-hard parameterized by $\operatorname{td}(G)+k$.  Consequently, it is $\mathsf{W[1]}$-hard parameterized by $\operatorname{tw}(G)+k$.  The hardness already holds for target densities $1<\tau_\rho<2$.
\end{theorem}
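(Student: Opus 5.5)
We reduce from RMSS, which is $\mathsf{W[1]}$-hard parameterized by $d+K$ even in unary. Given $a_1,\ldots,a_n\in\mathbb N^d$, $t$, and $K$, we aim for a graph of treedepth bounded by a function of $d$ and for budget $k=K$. The target will be $\tau_\rho=p/q$ with $1<p/q<2$; the concrete choice is fixed by the gadget arithmetic below, e.g.\ $\tau_\rho=1+1/(2Q)$ for a large polynomial $Q$. The hardness for $\operatorname{tw}(G)+k$ then follows from $\operatorname{tw}(G)\le\operatorname{td}(G)-1$.

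\textbf{Construction.} For every coordinate $j\in[d]$ we add a hub vertex $x_j$; several clones per coordinate can be used if robustness against deleting hubs is needed. Every item $i$ becomes a vertex $v_i$, attached to $x_j$ through $a_i[j]$ internally disjoint length-two paths $v_i$–$u$–$x_j$. This is the ``multiplicity of length-two paths'' encoding. After deleting the $x_j$ (a set of size $d$), each $v_i$ together with its paths is a star, a tree of depth $2$. Hence $\operatorname{td}(G)\le d+O(1)$.

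Because this graph is sparse (density close to $1$), we also attach to each $x_j$ a calibrated structure, for instance cycles or a dense but bounded-depth pendant, so that the excess of the subgraph consisting of $x_j$, its calibration, and all paths to undeleted items equals (up to scaling) $\sum_{i\text{ not deleted}} a_i[j]-(\sum_i a_i[j]-t[j])$. A $\ge$-counterpart enforces the reverse inequality. Since deletions can only decrease density, a pure capacity constraint yields only ``deleted sum $\ge t$''. To obtain equality we use the complement trick from \cref{thm:distance-cluster-hard}: set $z_{i,j}=C a_i[j]-b[j]$ with $C=n-K$, and encode plus/minus copies with weights $L\pm z_{i,j}$ via path multiplicities. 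The two upper-bound constraints then add up to force exactly $K$ surviving deletions and $\sum_{i\in D}a_i=t$, exactly as in that proof. The cardinality coordinate (all ones) is included to pin $|D|=K$.

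\textbf{Key steps.} First, compute the excess of a path $v_i$–$u$–$x_j$: each middle vertex adds $2q-p$, and each $v_i$ adds $-p$ plus its edges. We choose $p/q$ so that a single item component contributes a positive amount exactly when its hub is included and scales linearly with $a_i[j]$. Second, show that for any witness $U$, its excess decomposes as a sum over hubs, and for each hub it is maximized by taking all surviving full paths. This reduces feasibility to one linear inequality per hub, which is the analogue of \eqref{eq:hard-singleton-capacity}. Third, prove the forward direction by deleting the vertices $v_i$, $i\in D$. Fourth, and hardest, is the normalization: arbitrary deletions of middle vertices, hubs, or calibration vertices must be shown to give no advantage over deleting item vertices. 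The idea is that deleting a middle vertex removes only one unit of load from one coordinate, while deleting $v_i$ removes $a_i$ across all coordinates simultaneously. With weights scaled so that each $v_i$ carries load at least $L$ (via the $L\pm z$ offsets), one non-item deletion cannot compensate a missing item deletion. Hub deletions are neutralized by $K+1$ clones. Separately, every vertex $v_i$ must be shown to be useful only through its full path bundle.

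I expect the main obstacle to be the excess bookkeeping for paths of length two at a target strictly between $1$ and $2$. The local contribution of a partial bundle must be nonpositive, so that witnesses take whole bundles, while full bundles through a hub remain additive. This requires careful selection of $p/q$ and the calibration gadget at each hub, and the normalization argument hinges on this as well.
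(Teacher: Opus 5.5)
Your outline shares the paper's skeleton: RMSS, complement weights $L\pm z_{i,j}$, multiplicities of length-two paths, $K+1$ clones of every signed anchor, and treedepth after deleting the anchors. However, the construction as you describe it does not give per-hub linear capacity constraints, and the missing ingredient is at the items, not at the hubs. A bare item vertex $v_i$ costs $-p$ in every witness containing it. For a chosen set $W$ of hubs it therefore contributes $-p+(2q-p)\sum_{x\in W}w(i,x)$, where $w(i,x)$ is the number of paths between $v_i$ and $x$. This fixed cost causes two failures:
\begin{itemize}
\item A witness may discard items whose bundles do not repay $p$, so the maximum excess is not linear in the surviving weights.
\item The cost is paid only once for all hubs an item reaches, so the excess does not split as a sum over hubs. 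A witness using several hubs can be positive even though every single-hub witness is not.
\end{itemize}
A calibration gadget attached to the hubs cannot cancel a cost located at the items. The paper's key gadget is an item core $F_i$ with exactly $q$ vertices and $p=2q-3$ edges ($K_{2,q-2}$ plus the edge $u_iv_i$). The whole core has excess $0$, no subset containing $u_i$ has positive excess, and each connector with both endpoints present adds exactly $2q-p=3$. Hence every anchor contributes $-p+3\sum_{i\in F}w^\sigma_{i,j}$ independently of the others. Choosing $3P<p<3(P+1)$ then turns feasibility into $\sum_{i\in F}w^\sigma_{i,j}\le P$. The capacity is encoded in the target itself, so no hub calibration is needed.

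Your tentative target $1+1/(2Q)$ is in the wrong regime and fails outright. Any surviving item, one surviving clone, and three of the connectors between them induce $K_{2,3}$, of density $6/5$. Since $\max\{w^+_{i,j},w^-_{i,j}\}\ge L$, and the $K+1$ bundles from a surviving item to the clones of one signed coordinate are vertex-disjoint, some bundle survives any deletion of $K$ vertices. So for $L\ge 3$ every produced instance is a no-instance. Adding calibration structure cannot lower the density of a subgraph already present. In general, $w$ parallel length-two paths form $K_{2,w}$ of density $2-4/(w+2)$, so the target must lie within $O(1/L)$ of $2$. This is why the paper takes $\tau_\rho=2-3/q$.

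Two smaller points:
\begin{itemize}
\item The normalization is cleaner by dominance than by counting. Replace all deletions in the private region of item $i$ (its core and incident connectors) by the single deletion of $u_i$, and undo anchor deletions using the clones.
\item The extra all-ones coordinate is unnecessary. Adding the two signed constraints already gives $|F|\le C$, and the budget gives $|F|\ge C$.
\end{itemize}
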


\begin{proof}
We reduce from unary RMSS.  Let $(a_1,\ldots,a_n,t,K)$ be an instance of dimension $d$.  If $K>n$, or if $K=n$ and $\sum_i a_i\ne t$, or if some coordinate $t[j]$ exceeds $A_j:=\sum_i a_i[j]$, we can output a fixed no-instance.  The case $K=n$ and $\sum_i a_i=t$ can similarly be mapped to a fixed yes-instance.  Hence assume $K<n$ and put
$C:=n-K\ge1$ and $B_j:=A_j-t[j]$ for $j\in[d]$.

For every item $i\in[n]$ and coordinate $j\in[d]$, let $z_{i,j}:=C\,a_i[j]-B_j$. Choose $L:=1+\max_{i,j}|z_{i,j}|$ and $P:=CL$, and define the positive integers $w^+_{i,j}:=L+z_{i,j}$ and $w^-_{i,j}:=L-z_{i,j}$.
We next encode the common capacity $P$ into the target density.  Let
\[
 p:=
 \begin{cases}
 3P+1,&P\text{ even},\\
 3P+2,&P\text{ odd},
 \end{cases}
 \qquad
 q:=\frac{p+3}{2}.
\]
Then $p=2q-3$, $q\ge4$, and
\begin{equation}
\label{eq:twk-gap}
 3P<p<3(P+1).
\end{equation}
In particular, $\tau_\rho:=p/q=2-3/q$ lies strictly between $1$ and $2$. We use this integer representation of the rational target throughout the excess calculation; reducing the fraction would only multiply all excess values by the same positive factor.

We construct the graph $G$ as follows.  For each signed coordinate $(j,\sigma)\in[d]\times\{+,-\}$, create $K+1$ \emph{anchor} vertices
$x^{1}_{j,\sigma},\ldots,x^{K+1}_{j,\sigma}$.

For every item $i\in[n]$, create an item core $F_i$ with two distinguished vertices $u_i,v_i$ and $q-2$ further vertices $h_i^1,\ldots,h_i^{q-2}$.  Add the edge $u_iv_i$ and, for every $r\in[q-2]$, the two edges $u_ih_i^r$ and $v_ih_i^r$.  Thus $F_i$ is $K_{2,q-2}$ together with the edge joining the two vertices on the side of size two.  It has exactly $q$ vertices and $2q-3=p$ edges.

Finally, for every item $i$, signed coordinate $(j,\sigma)$, anchor copy $r\in[K+1]$, and every $\ell\in[w^\sigma_{i,j}]$, create a connector vertex $y^{r,\ell}_{i,j,\sigma}$ adjacent precisely to $u_i$ and $x^r_{j,\sigma}$.  In other words, $w^\sigma_{i,j}$ internally vertex-disjoint paths of length two join $u_i$ to each copy of the corresponding anchor.  Set the BDVD deletion budget to $k:=K$ and the target density to $p/q$.

We first record the local excess properties of the construction. The whole item core has excess zero because $\operatorname{exc}(V(F_i))=q(2q-3)-pq=0$. Moreover, among subsets of $F_i$ containing $u_i$, the maximum internal excess is zero and is attained by the whole core. Indeed, if both $u_i,v_i$ and exactly $s$ of the $q-2$ common neighbors are present, then the excess is $q(2s+1)-p(s+2)=3(s-q+2)\leq0$,
with equality exactly for $s=q-2$.  If at most one of $u_i,v_i$ is present, the induced graph is a forest of density below $1<p/q$, and hence has negative excess whenever it is nonempty.

A connector vertex has an equally simple behavior. If both endpoints are in a candidate witness, inserting the connector changes the excess by $2q-p=3>0$.
If exactly one endpoint is present, the change is $q-p=3-q<0$, and if neither endpoint is present the change is $-p<0$.  Consequently, in a maximum-excess witness a connector is present exactly when both of its endpoints are present.

Consider first a deletion set $D\subseteq\{u_1,\ldots,u_n\}$ consisting only of item selectors, and let $F:=\{i:u_i\notin D\}$ be the surviving items.  By the preceding observations, every maximum-excess witness can be put in the following canonical form without decreasing its excess: whenever it contains $u_i$, it contains the whole core $F_i$; and for every chosen anchor it contains precisely all connector vertices whose other endpoint is a chosen surviving selector.  Since all connector multiplicities are nonnegative, once an anchor is chosen it is optimal to choose every surviving item selector.  Thus an anchor of signed coordinate $(j,\sigma)$ contributes
\begin{equation}
\label{eq:twk-anchor-excess}
 -p+3\sum_{i\in F}w^\sigma_{i,j}.
\end{equation}
Different anchors contribute additively, while the complete item cores contribute zero.  It follows that
\begin{equation}
\label{eq:twk-capacity}
 \rho^*(G-D)\le p/q
 \quad\Longleftrightarrow\quad
 \sum_{i\in F}w^\sigma_{i,j}\le P
 \text{ for every }j\in[d],\ \sigma\in\{+,-\},
\end{equation}
where we used \eqref{eq:twk-gap} and integrality of the left-hand side.

We next show that arbitrary vertex deletions do not provide any additional power.  Let $S$ be a feasible deletion set with $|S|\le K$.  For item $i$, call its \emph{private region} the vertices of $F_i$ together with all connector vertices incident with $u_i$.  Whenever $S$ intersects the private region of item $i$, replace all deletions made in that private region by the single deletion of $u_i$.  This does not increase the number of deleted vertices and preserves feasibility.  To see this, after deleting $u_i$ the vertices of $F_i-u_i$ form a star that is completely disconnected from the rest of the graph, while every restored connector is merely a pendant vertex at an anchor. Since $p/q>1$, every nonempty induced subgraph of the star has negative excess, and adding such a pendant connector to a witness changes the excess by $q-p<0$ if its anchor is present and by $-p<0$ otherwise. Thus all vertices restored in this private region can be removed from any positive-excess witness without decreasing its excess. The remaining witness was already present before the replacement, contradicting feasibility of the original deletion set.

After performing these replacements, the deletion set consists only of item selectors and anchors.  Anchor deletions can also be removed.  Indeed, there are $K+1$ identical copies of every signed-coordinate anchor, whereas at most $K$ vertices are deleted in total.  Hence at least one copy of every anchor type survives.  Feasibility therefore implies that the quantity in \eqref{eq:twk-anchor-excess} is nonpositive for every anchor type.  Restoring deleted copies merely adds further anchors with the same nonpositive contribution.  By the canonical-form argument above, feasibility is preserved.  Consequently every yes-instance admits a solution $D\subseteq\{u_1,\ldots,u_n\}$ of size at most $K$.

It remains to verify the arithmetic encoding.  Suppose first that the RMSS instance has a solution $D\subseteq[n]$ with $|D|=K$ and $\sum_{i\in D}a_i=t$.  Its complement $F=[n]\setminus D$ has $|F|=C$ and $\sum_{i\in F}a_i[j]=B_j$ for every coordinate $j$. Hence $\sum_{i\in F}z_{i,j}=C\sum_{i\in F}a_i[j]-|F|B_j=0$, and therefore $\sum_{i\in F}w^+_{i,j}=\sum_{i\in F}w^-_{i,j}=CL=P$.
and \eqref{eq:twk-capacity} shows that deleting the selectors $\{u_i:i\in D\}$ is a feasible BDVD solution.

Conversely, let $D$ be a canonical BDVD solution of size at most $K$, and let $F=[n]\setminus D$.  By \eqref{eq:twk-capacity}, for every $j$ we have $|F|L+\sum_{i\in F}z_{i,j}\leq CL$ and $|F|L-\sum_{i\in F}z_{i,j}\leq CL$.
Adding the inequalities gives $|F|\le C$.  On the other hand, $|D|\le K$ implies $|F|\ge n-K=C$.  Thus $|F|=C$ and $|D|=K$.  Substituting $|F|=C$ back into the two inequalities forces
$\sum_{i\in F}z_{i,j}=0$ for every $j$.  By the definition of $z_{i,j}$, $0=C\sum_{i\in F}a_i[j]-CB_j$, so $\sum_{i\in F}a_i[j]=B_j$. Therefore $\sum_{i\in D}a_i[j]=A_j-B_j=t[j]$
for every coordinate $j$, and $D$ is an RMSS solution.

Finally, the construction preserves the claimed structural parameter. Let $X$ be the set of all anchors, so $|X|=2d(K+1)$. After deleting $X$, every connected component consists of one item core $F_i$ with some pendant connector vertices attached to $u_i$.  Such a component has treedepth at most three: take $u_i$ as root, $v_i$ as its child, every $h_i^r$ as a child of $v_i$, and every connector vertex as a child of $u_i$.  Placing all vertices of $X$ on a chain above these rooted forests gives $\operatorname{td}(G)\leq2d(K+1)+3$. Thus $\operatorname{td}(G)+k$ is bounded by a function of the RMSS parameter $d+K$.  Since all numbers in RMSS are unary, $L,P,p,q$ and all connector multiplicities are polynomially bounded in the input size, so the reduction is polynomial-time.  This proves the theorem.
\end{proof}

\begin{corollary}
\label{cor:tw-k-hard}
Unless $\mathsf{FPT}=\mathsf{W[1]}$, 
\textsc{Bounded-Density Vertex Deletion} is not fixed-parameter tractable parameterized by $\operatorname{tw}(G)+k$ when the target density is part of the input.
\end{corollary}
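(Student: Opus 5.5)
This corollary follows directly from \cref{thm:tw-k-hard}. I would argue by contraposition: suppose \textsc{Bounded-Density Vertex Deletion} admitted an algorithm with running time \(f(\operatorname{tw}(G)+k)\cdot\operatorname{poly}(|I|)\), where the target density is part of the input.

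\begin{itemize}
\item First, recall that \(\operatorname{tw}(G)\leq\operatorname{td}(G)-1\) (see \cref{sec:preliminaries}). Hence \(\operatorname{tw}(G)+k\leq\operatorname{td}(G)+k\).
\item Since \(f\) may be assumed nondecreasing, the hypothetical algorithm also runs in time \(f(\operatorname{td}(G)+k)\cdot\operatorname{poly}(|I|)\). It would therefore be an FPT algorithm for the parameter \(\operatorname{td}(G)+k\).
\item Composing it with the parameterized reduction from unary RMSS in the proof of \cref{thm:tw-k-hard} then gives an FPT algorithm for RMSS parameterized by \(d+K\). That reduction runs in polynomial time, produces targets \(1<\tau_\rho<2\) given as input rationals \(p/q\), and bounds \(\operatorname{td}(G)+k\) by \(2d(K+1)+3+K\).
\item RMSS is \(\mathsf{W[1]}\)-hard for this parameter~\cite{GanianKluteOrdyniak2021}, so such an algorithm would imply \(\mathsf{FPT}=\mathsf{W[1]}\).
\end{itemize}

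The only subtle point is making precise that ``target density part of the input'' is needed. The reduction uses targets \(p/q=2-3/q\) with \(q\) growing with the instance, so the hardness does not transfer to any fixed target. The statement as phrased is exactly what the reduction supports.

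There is no real obstacle here. The work consists of checking that the parameter monotonicity and the composition of FPT algorithms with parameterized reductions apply verbatim.
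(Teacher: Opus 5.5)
Your argument is correct and follows the paper's route. The corollary is an immediate consequence of \cref{thm:tw-k-hard}: its \(\operatorname{td}(G)+k\) hardness transfers to \(\operatorname{tw}(G)+k\) via \(\operatorname{tw}(G)\leq\operatorname{td}(G)-1\), and a \(\mathsf{W[1]}\)-hard parameterized problem cannot be fixed-parameter tractable unless \(\mathsf{FPT}=\mathsf{W[1]}\). The details you spell out, namely the monotonicity of \(f\) and the explicit parameter bound \(2d(K+1)+3+K\), are the ones the paper leaves implicit.
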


The proof also pinpoints the numerical source of the hardness.  The constructed target is always of the form $2-3/q$ and approaches $2$ as the arithmetic capacity grows.  This is consistent with the fixed-target tractability results of Raupach et al.~\cite{RaupachBreitkopfHerrmannNichterlein2026}: the reduction fundamentally requires the target density to vary with the input. \qed

\section{Conclusion and Open Problems}
\label{sec:conclusion}

We resolve three open structural directions for \textsc{Bounded-Density Vertex Deletion}. First, the previously open twin-cover case is fixed-parameter tractable for arbitrary input density. Second, distance to cluster admits a sharp classification with respect to the numerical threshold: BDVD is FPT parameterized by \(\operatorname{dc}(G)+\lfloor2\tau_\rho+1\rfloor\), but is \(\mathsf{W[1]}\)-hard parameterized by \(\operatorname{dc}(G)\) alone when the target density is part of the input, even for integral targets. In particular, distance to cluster alone is FPT throughout \(\tau_\rho<1\). Third, the combined treewidth-plus-solution-size question has a negative answer: BDVD is \(\mathsf{W[1]}\)-hard already for \(\operatorname{td}(G)+k\), and hence also for \(\operatorname{tw}(G)+k\), with targets restricted to \((1,2)\).

A natural remaining structural question is the feedback edge number, explicitly left open by Raupach et al.~\cite{RaupachBreitkopfHerrmannNichterlein2026}. It is also interesting to seek purely combinatorial alternatives to the fixed-dimensional integer programs used for twin cover and distance to cluster and to investigate kernelization for these parameters.

\section*{Acknowledgments}

The authors used GPT-5.6 Sol during the preparation of this
work to brainstorm and refine constructions, explore proof
strategies, and discuss and check mathematical arguments.
The model also assisted with proofreading and aspects of
the literature review. All mathematical claims, proofs,
and cited results were independently verified by the authors,
who take full responsibility for the accuracy and content
of the paper.

\bibliography{references}
\end{document}